\DeclareMathOperator{\cl}{cl}
\newcommand{\bigdotcup}{\mathop{\dot\bigcup}}
\newcommand{\cost}{\text{cost}}
\newcommand{\cW}{\mathcal W}
\newcommand{\Ind}{\vvmathbb 1}
\newcommand{\N}{\mathbb N}
\newcommand{\R}{\mathbb R}
\newcommand{\wfa}{\text{WFA}\xspace}
\newcommand{\cM}{\mathcal M}
\let\pref=\prettyref
\newtheorem{theorem}{Theorem}
\newtheorem{claim}[theorem]{Claim}
\newtheorem{lemma}[theorem]{Lemma}
\newtheorem{open}{Open problem}
\title{Metrical Service Systems with Transformations}
\author{
S{\'{e}}bastien Bubeck\thanks{Microsoft Research. Email: \texttt{sebubeck@microsoft.com}} \and
Niv Buchbinder\thanks{Tel Aviv University, Israel. Email: \texttt{niv.buchbinder@gmail.com}.}
\and
Christian Coester\thanks{CWI, Amsterdam, Netherlands. Email: \texttt{christian.coester@cwi.nl}.}
\and Mark Sellke\thanks{Stanford University. Email: \texttt{msellke@stanford.edu}}
}
\date{}
\begin{document}

\maketitle

\begin{abstract}
We consider a generalization of the fundamental online metrical service systems (MSS) problem where the feasible region can be transformed between requests. In this problem, which we call T-MSS, an algorithm maintains a point in a metric space and has to serve a sequence of requests. Each request is a map (transformation) $f_t\colon A_t\to B_t$ between subsets $A_t$ and $B_t$ of the metric space. To serve it, the algorithm has to go to a point $a_t\in A_t$, paying the distance from its previous position. Then, the transformation is applied, modifying the algorithm's state to $f_t(a_t)$. Such transformations can model, e.g., changes to the environment that are outside of an algorithm's control, and we therefore do not charge any additional cost to the algorithm when the transformation is applied. The transformations also allow to model requests occurring in the $k$-taxi problem.

We show that for $\alpha$-Lipschitz transformations, the competitive ratio is $\Theta(\alpha)^{n-2}$ on $n$-point metrics. Here, the upper bound is achieved by a deterministic algorithm and the lower bound holds even for randomized algorithms. For the $k$-taxi problem, we prove a competitive ratio of $\tilde O((n\log k)^2)$. For chasing convex bodies, we show that even with contracting transformations no competitive algorithm exists.

The problem T-MSS has a striking connection to the following deep mathematical question: Given a finite metric space $M$, what is the required cardinality of an extension $\hat M\supseteq M$ where each partial isometry on $M$ extends to an automorphism? We give partial answers for special cases.
\end{abstract}

\section{Introduction}
\emph{Metrical Service Systems (MSS)} \cite{ChrobakL91} is a fundamental online framework unifying countless problems. It also has a central role in our understanding of online computation and competitive analysis in general.
In this problem we are given a metric space $(M,d)$. The points of the metric represent possible states/configurations where an algorithm can serve requests; the distance between the states represents the cost of moving from one configuration to another. Each request consists of a subset of feasible states and the algorithm must serve the request by moving to one of these states. The cost of the algorithm for a sequence of requests is simply the total movement cost.

In the closely related problem of \emph{Metrical Task Systems (MTS)} \cite{BLS92}, each request is a cost function $c_t\colon M\to\R_+\cup\{\infty\}$. The algorithm can move to any point $x_t$, paying the movement cost as well as service cost $c_t(x_t)$. Note that MSS is equivalent to the special case of MTS where cost functions only take values $0$ and $\infty$, which already captures the essential difficulty of MTS. \footnote{Our results extend easily to the case where MTS requests are allowed, but we will stick to the MSS view for the sake of simplicity.} For deterministic algorithms, the competitive ratio on any $n$-point metric is $n-1$ for MSS~\cite{ManasseMS88} and $2n-1$ for MTS~\cite{BLS92}. For randomized algorithms, it lies between $O(\log^2 n)$ \cite{BCLL19,CL19} and $\Omega(\log n/\log\log n)$ \cite{BartalBM06,BartalLMN05}, and tight bounds of $\Theta(\log n)$ are known for some metrics. The MSS/MTS framework captures various central online problems such as paging, $k$-server, convex body chasing, layered graph traversal, etc. The competitive ratio of MSS usually serves as a first upper bound for the performance achievable for these special cases.

However, MSS fails to capture more dynamic environments in which configuration changes that are outside of the algorithm's control may occur. For example, new resources or constraints may appear/disappear and modify the configurations.
To capture such changes we propose an extension for MSS that allows transformations over the configuration space. For example, we may model the $k$-taxi problem\footnote{In this problem, introduced by \cite{FiatRR90}, there are $k$ taxis in a metric space. Each request is a pair of two points, representing the start and destination of a travel request by a passenger. Serving a request is done by selecting a taxi that travels first to its start and then its destination. In the hard version of the problem, the cost is defined as the total distance traveled by the taxis \emph{without carrying a passenger}.} by considering the possible configurations of taxis in the metric. A movement of a taxi from the start to the destination of a request simply corresponds to a transformation that maps any configuration that contains a taxi at the start to a configuration with an additional taxi at the destination and one less taxi at the start.
As these changes are dictated to any solution, it is reasonable not to account any cost for these changes both for the algorithm and for the offline (benchmark) solution.
In this work we initiate the study of {\em Metrical Service Systems with Transformations} (T-MSS), a generalization of the standard Metrical Service Systems (MSS) problem.
As before, we are given a metric space $(M,d)$. In each round $t$, we get a function (transformation) $f_t\colon A_t\to B_t$ that maps a subspace $A_t\subseteq M$ of feasible states to a subspace $B_t\subseteq M$. If $b_{t-1}\in M$ is the state of the algorithm before the request $f_t$ arrives, it has to choose one of the feasible states $a_t\in A_t$ and pays movement cost $d(b_{t-1},a_t)$. The new state of the algorithm is then $b_{t}:=f_t(a_t)$. The classical MSS problem is thus the special case of T-MSS where $f_t$ is the identity function on the set of feasible states at time $t$. In T-MSS we allow in addition to identity transformations also more complex transformations.
The high level question we ask is:
\begin{open}
What is the competitive ratio of the Metrical Service Systems with Transformations problem for families of metric spaces and allowable transformations?
\end{open}

\subsection{Our results and techniques}

We give partial answers to the above question, obtaining upper and lower bounds on the competitiveness for several interesting families of metric spaces and transformations.
The most general family of transformations we study are $\alpha$-Lipschitz transformations. 
Our main result is the following pair of (almost matching) upper and lower bounds for general metric spaces and $\alpha$-Lipschitz transformations.

\begin{restatable}{theorem}{cbounds}
\label{thm:cbounds}
There exists a deterministic $2\max\{2(\alpha+1),6\}^{n-2}$-competitive algorithm for T-MSS with $\alpha$-Lipschitz transformations on any $n$-point metric space.
Any algorithm for T-MSS with $\alpha$-Lipschitz transformations has competitive ratio at least $\min\{\alpha+1,\alpha^2\}^{n-2}$, even with randomization.
\end{restatable}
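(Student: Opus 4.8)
\emph{Proof approach.} The two bounds are established by entirely separate arguments: an explicit deterministic online algorithm analyzed by induction on $n$ for the upper bound, and a recursively constructed adversarial instance (combined with Yao's minimax principle) for the lower bound. In both cases the exponential dependence on $n$ comes from a single ``gadget'' that enlarges the metric by a constant number of points and multiplies the relevant quantity --- the competitive ratio achieved, respectively forced --- by a factor that is linear in $\alpha$. Since a two-point metric forms the base of each recursion, the exponent is $n-2$ rather than $n$.

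\textbf{Upper bound.} The plan is to build the algorithm recursively. For $n=2$ one verifies directly that a suitable lazy strategy on a two-point metric is $O(1)$-competitive even under $\alpha$-Lipschitz transformations. For the inductive step, the algorithm runs, from its current position, the $(n-1)$-point algorithm on the sub-metric obtained by ``ignoring'' one point, until a movement budget calibrated to its own spending so far is exhausted; it then re-coordinates by moving to a carefully chosen point and restarts the sub-procedure while ignoring a different point. Each re-coordination is charged, via a potential/amortization argument, partly against a constant fraction of the algorithm's recent spending and partly against \opt's spending magnified by the Lipschitz constants of the intervening transformations. The two terms of $\max\{2(\alpha+1),6\}$ reflect these two charges: $2(\alpha+1)$ bounds the distortion an $\alpha$-Lipschitz transformation can create between the scale at which the algorithm committed and the scale at which \opt operates, while the constant $6$ dominates for small $\alpha$ and absorbs the (deliberately crude) combinatorial overhead of eliminating one point per level. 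The main obstacle is tuning the budgets and the choice of the ignored point so that the induction closes with exactly this per-level factor; in particular one must ensure that transformations whose domain and range are far apart --- which \opt may exploit for free --- do not invalidate the invariant maintained by the sub-procedure.

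\textbf{Lower bound.} For the lower bound I would construct an $n$-point metric and request strategy recursively. The base is a two-point metric, on which every algorithm is $1$-competitive. The recursive step takes an $(n-1)$-point hard instance placed at a tiny scale $\varepsilon$ and adds one new point $p$ at distance $\Theta(1)$: the adversary first drives the algorithm deep into the small copy, forcing (by induction) cost at least $c^{\,n-3}$ times the small copy's \opt, and then applies one or two $\alpha$-Lipschitz transformations that ``zoom out'' --- either dilating the small copy by a factor $\alpha$ or routing play through $p$ --- so that (i) an offline trajectory anticipating the zoom-out pays only an $O(1/c)$ fraction of what the algorithm must now pay at the enlarged scale, and (ii) the configuration reached is a legal starting configuration for the next level. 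A short case analysis of this gadget yields a per-level multiplier of $\min\{\alpha+1,\alpha^2\}$, the two expressions corresponding to the two ways the adversary can balance the cost of the zoom-out transformation against the distance the algorithm is forced to travel. To handle randomized algorithms I would randomize the adversary's choice at each level and invoke Yao's principle, reducing to a lower bound on the expected cost of an arbitrary deterministic algorithm. The principal difficulty is certifying that \opt saves the full factor $c$ at \emph{every} level simultaneously: this calls for a single offline trajectory (or distribution over trajectories) that is cheap across all $n-2$ levels at once, which forces the scales to be geometrically separated and the transformations at different levels to be mutually compatible.
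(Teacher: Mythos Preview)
Your two halves are at very different levels of credibility, and neither matches the paper's argument.

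\textbf{Upper bound.} The paper does \emph{not} build a recursive algorithm that ignores one point at a time. Instead it proceeds by three reductions: (i) distort the given $n$-point metric into an ultrametric so that every $\alpha$-Lipschitz map becomes $1$-Lipschitz, at the price of a $\max\{\alpha+1,3\}^{n-2}$ multiplicative distortion; (ii) embed that ultrametric into an \emph{ultrahomogeneous} ultrametric of size $2^{n-1}$; (iii) show that on an ultrahomogeneous ultrametric, \wfa with the standard potential $\Phi(w)=\sum_p w(p)$ is $(|\hat M|-1)$-competitive for $1$-Lipschitz transformations. The product $\max\{\alpha+1,3\}^{n-2}\cdot 2^{n-1}=2\max\{2(\alpha+1),6\}^{n-2}$ is exactly how the constant arises; the ``$6$'' is $2\cdot 3$, not a separate combinatorial slack. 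Your budget-and-restart scheme is a different route and, as you yourself flag, it has a real unresolved step: a transformation $f_t\colon A_t\to B_t$ can land the algorithm on (or force it through) the ``ignored'' point, and more generally the free relocation $a_t\mapsto f_t(a_t)$ can destroy any invariant your sub-procedure maintains on the $(n-1)$-point submetric. Until you specify how the ignored point is chosen \emph{and updated} so that this never costs more than the claimed per-level factor, the induction does not close. This is not a matter of tuning constants; it is the heart of the difficulty that the paper sidesteps by passing to ultrametrics, where balls are nested and the potential argument for \wfa survives contractions.

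\textbf{Lower bound.} Here you are closer in spirit: the paper also uses geometrically separated scales with one new point per level and a single random bit at the end. But the construction is explicit rather than recursive. The metric has $d(p_1,p_i)=m^{i-2}$ and $d(p_i,p_{i+1})=\alpha m^{i-2}$ with $m=\min\{\alpha+1,\alpha^2\}$; the request sequence alternates two-point maps $\{p_1,p_{t+1}\}\to\{p_{t+1},p_{t+2}\}$ and $\{p_{t+1},p_{t+2}\}\to\{p_1,p_{t+2}\}$, each $\alpha$-Lipschitz, and finishes with a single request to a uniformly random point in $\{p_1,p_n\}$. One offline trajectory (either stay at $p_1$ or move once to $p_2$) serves the whole sequence for expected cost $1/2$, while any online algorithm pays $m^{n-2}/2$ in expectation on the last request. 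Your ``zoom-out by dilation'' picture is not what happens --- the maps act on two points at a time, not on the whole small copy --- and your worry about certifying that \opt saves the factor at every level simultaneously dissolves once you see the explicit construction: a single initial move suffices. If you pursue your recursive version, the content you are missing is precisely the pair of two-point $\alpha$-Lipschitz maps above and the verification that $m\le\alpha^2$ is what makes the second map $\alpha$-Lipschitz.
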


Although our results show an exponential lower bound (and an exponential upper bound) for any $\alpha>1$, they do not rule out linear/polynomial or with randomization even polylogarithmic competitive ratios when the transformations are $1$-Lipschitz (contractions). Resolving the competitive ratio for this important family of transformations is one of the most interesting remaining open questions.
On the other hand, we show that even restricting the transformations to be $1$-Lipschitz is not always enough.
In particular, when adding contraction transformations to the convex body chasing problem (that can be modeled as a special case of MSS on an infinite metric space) there exists no competitive algorithm even in the easier nested case. By contrast, with isometry transformations the competitive ratios $O(d)$ for the unrestricted problem and $O(\sqrt{d\log d})$ for the nested problem due to \cite{cbc6,cbc5,cbc4} remain unchanged as $\mathbb R^d$ is ultrahomogeneous (see following discussion for formal definitions).

\begin{restatable}{theorem}{cbc}
\label{thm:cbc}
There exists no online algorithm with finite competitive ratio for nested convex body chasing with contractions in the plane, even with randomization.
\end{restatable}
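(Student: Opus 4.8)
The plan is to prove something stronger than the stated claim: for every $N\in\N$ we exhibit an instance of nested convex body chasing in $\R^2$ with $1$-Lipschitz transformations on which the offline optimum $\opt$ pays at most an absolute constant $C$, while every online algorithm — even randomized — pays at least $\Omega(N)$ in expectation. Since $N$ is arbitrary and all these instances live in the same metric space $\R^2$, no algorithm can satisfy $\cost(\mathrm{ALG})\le\rho\cdot\opt+b$ for finite $\rho,b$, which is exactly what ``no finite competitive ratio'' means, and it already accounts for randomization if we phrase the lower bound via Yao's principle.

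The construction concatenates $N$ copies of a single \emph{folding gadget}. Fixing a bounded ``canonical'' region of the plane, say a disk $D$, the gadget is a constant-length sequence of nested convex sub-bodies of $D$ followed by one $1$-Lipschitz transformation $\phi$ (a reflection of the plane across a line, composed with a nested-body descent and a rigid motion) with three properties. \textbf{Forcing:} wherever the algorithm sits when the gadget begins, by the time it ends the algorithm has been forced to move $\Omega(1)$ in expectation — this uses the standard way of steering a convex-body-chasing algorithm toward an extreme point of $D$. \textbf{Recyclability:} $\phi$ maps the final body of the gadget onto a body congruent to $D$, so the state of the system after the gadget is a verbatim copy of the state before it; this is what lets us chain $N$ gadgets without the length scale decaying. \textbf{Cheap offline thread:} $\phi$ is genuinely many-to-one, and among the $\phi$-preimages of the region that the \emph{next} gadget forces into there is one that $\opt$ — which sees the whole future, including a random bit choosing the orientation of $\phi$ — can reach essentially for free.

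Given the gadget, the instance for parameter $N$ is $N$ gadgets in a row with the orientation bits drawn independently and uniformly, and I would finish by a Yao-type argument. Fix this distribution over instances; when a deterministic algorithm must commit inside a gadget it does not yet know that gadget's bit, so with probability at least $1/2$ it ends up ``on the wrong side'' of the fold and pays $\Omega(1)$ there, giving expected total cost $\Omega(N)$. For every realization of the bits, on the other hand, the cheap offline threads of the successive gadgets concatenate into a single trajectory of total cost $O(1)$, so $\opt\le C$.

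The main obstacle is building the gadget so that Forcing, Recyclability and Cheap offline thread hold simultaneously, and this is genuinely delicate because there is an apparent contradiction: nested convex-body requests can only shrink the feasible region and $1$-Lipschitz maps can only (weakly) shrink it, so naively any $\Omega(1)$ of forced movement ``eats'' $\Omega(1)$ of diameter, which would make the scale — hence the per-gadget forced cost — decay geometrically and the total bounded, the opposite of what we want. The way I would resolve this is to always fold across a line containing a diameter-realizing pair of the current body (so the fold preserves the diameter), to arrange the forcing so that the algorithm's displacement is \emph{along} the fold direction (so the fold can simultaneously undo the displacement and reinflate the region to canonical size), and to exploit precisely the two-to-one-ness of this fold to hand $\opt$ an alternative preimage that its online counterpart cannot identify in advance. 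A secondary technical point is to check that the offline threads of consecutive gadgets glue up with only $O(1)$ total stitching cost, which I would control by tracking a work-function/potential across the folds. Since everything then takes place inside a fixed bounded region of $\R^2$, the construction is a legitimate instance of nested convex body chasing with contractions in the plane, completing the proof.
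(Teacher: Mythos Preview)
Your proposal has a genuine and fatal gap at the Recyclability step. You want a $1$-Lipschitz map $\phi$ from the final body $K_m$ of the gadget onto a body $B$ congruent to the starting body $D$. But $1$-Lipschitz maps on $\R^2$ cannot increase Lebesgue measure (the Jacobian has operator norm at most $1$ almost everywhere), so $\mathrm{area}(B)\le \mathrm{area}(K_m)\le \mathrm{area}(D)=\mathrm{area}(B)$, forcing equality throughout and hence $K_m=D$. That means no shrinking occurred inside the gadget, and then the Forcing property collapses: if every nested body in the gadget equals $D$, the algorithm never has to move. Your proposed resolution --- folding across a diameter to ``reinflate the region to canonical size'' --- cannot work: a fold such as $(x,y)\mapsto(|x|,y)$ halves area, and composing with rigid motions does not change that. (There is also a terminological slip: you first describe $\phi$ as a reflection, which is a bijective isometry, and then rely on $\phi$ being two-to-one; only a genuine fold is two-to-one, and it strictly contracts area.) So the tension you correctly identified between forcing movement and non-expansiveness is real and cannot be dissolved by a fold; the per-gadget scale must decay and your own worry about geometric decay of the forced cost materializes.

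The paper's argument is structurally different and does not attempt to recycle a bounded region. It starts from a rectangle $[0,1]\times[0,N]$ whose height already grows with the target ratio, and uses nearest-point projections (contractions) onto a nested sequence that repeatedly cuts small triangular notches near the current bottom edge. Up to negligible horizontal drift from the projections, the induced dynamics on the $x$-coordinate is a one-dimensional MTS with tent-shaped cost functions; after rounding to a grid of $n+1$ points this becomes an $n$-server instance on $n+1$ points, and the known $\Omega(\log n/\log\log n)$ randomized lower bound for that problem yields an unbounded competitive ratio as $n\to\infty$. The key idea you are missing is this reduction to a finite-state problem with a pre-existing unbounded lower bound, rather than trying to manufacture $\Omega(N)$ online cost against $O(1)$ offline cost directly from folding geometry.
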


As a byproduct of our results we also get a new competitive algorithm for the $k$-taxi problem, which can be modeled as a T-MSS with isometry transformations.

\begin{restatable}{theorem}{ktaxi}
\label{thm:ktaxi}
There is a randomized $O((n\log k)^2\log n)$-competitive algorithm for the $k$-taxi problem on $n$-point metrics.
\end{restatable}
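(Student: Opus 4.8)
The plan is to reduce the $k$-taxi problem to a metrical service system with \emph{isometry} transformations on the configuration space and then to eliminate the transformations by passing to a suitably symmetrized state space. Note that directly applying \pref{thm:cbounds} to the configuration metric would only give a bound exponential in the number of configurations, so the argument must exploit that $k$-taxi requests induce isometries.

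\emph{Step 1 (reduce to trees).} First I would use the standard probabilistic embedding of an $n$-point metric into a distribution over tree metrics (HSTs) with expected distortion $O(\log n)$. Since the $k$-taxi cost of any fixed solution scales linearly with the underlying distances, an $r(n,k)$-competitive randomized algorithm for $k$-taxi on an $n$-leaf HST yields an $O(r(n,k)\log n)$-competitive algorithm on the original metric; so it suffices to achieve $r(n,k)=O((n\log k)^2)$ on HSTs, and this is the only place the $\log n$ factor appears.

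\emph{Step 2 (model as T-MSS with isometries).} On a tree $T$ I take the state space to be the configuration space $\mathcal C$ of size-$k$ multisets of leaves of $T$ under the transportation (minimum-cost perfect matching) metric, so that reconfiguring the taxis costs exactly $d_{\mathcal C}$. A request $(s,d)$ becomes the transformation $f\colon A\to\mathcal C$ with $A=\{C\in\mathcal C: s\in C\}$ and $f(C)=C-s+d$ (move one taxi from $s$ to $d$); one checks (this holds for any metric) that $f$ is a partial isometry of $\mathcal C$, so $k$-taxi on $T$ is exactly T-MSS with isometry transformations on $\mathcal C$, and $\log|\mathcal C|\le\log\binom{n+k-1}{k}=O(n\log k)$.

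\emph{Step 3 (eliminate the transformations).} I would construct an extension $\hat{\mathcal C}\supseteq\mathcal C$ with $\log|\hat{\mathcal C}|=O(n\log k)$ in which every partial isometry arising from a request extends to an automorphism of $\hat{\mathcal C}$. Given this, run an $O(\log^2 N)$-competitive randomized MSS algorithm \cite{BCLL19,CL19} on $N:=|\hat{\mathcal C}|$ and translate it by relabeling: maintain an accumulated automorphism $g$ of $\hat{\mathcal C}$ (initially the identity) with the invariant that the MSS point equals $g$ applied to the true configuration; when a request with transformation $f$ arrives, present the set $g(A)$ to the MSS algorithm, which moves to $g(a)$ for some $a\in A$, and mirror this by moving the taxis to $a$ at the same cost (as $g$ is an isometry), then set the true state to $f(a)$ and update $g\leftarrow g\circ\hat f^{-1}$, so that the (unchanged) MSS point again equals $g$ applied to the true state. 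Because $g$ is always an isometry of $\hat{\mathcal C}$, the online and offline costs coincide with those of the MSS instance (the offline map $a_t^*\mapsto g_{t-1}(a_t^*)$ shows $\opt$ only decreases), so this algorithm is $O(\log^2|\hat{\mathcal C}|)=O((n\log k)^2)$-competitive on $T$, which together with Step~1 gives the theorem.

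\emph{Main obstacle.} The crux is Step~3: producing an extension of size $2^{O(n\log k)}$ on which all request-induced partial isometries extend to automorphisms --- a special case of the extension question highlighted in the introduction. I expect $\hat{\mathcal C}$ to be obtained by a controlled symmetrization of $T$ (for instance attaching a bounded number of virtual leaves at each internal node so that the source/target vertices of requests occupy interchangeable positions, exploiting that HST automorphisms permute isomorphic sibling subtrees), and then taking configurations over the enlarged tree. The delicate points are (a) verifying, via the matching structure of the transportation metric, that the relevant partial isometries genuinely extend over the enlarged space, and (b) keeping the number of added leaves $O(n)$ so that $\log|\hat{\mathcal C}|$ stays within a constant factor of $\log|\mathcal C|=O(n\log k)$.
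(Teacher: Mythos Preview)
Your outline is correct through Step~2 and in the overall reduction strategy, but Step~3 is a genuine gap, and your proposed fix is heading in the wrong direction. The paper does \emph{not} symmetrize the tree and then take configurations over the enlarged tree. Instead it changes the \emph{representation} of the configuration space: on a tree $\mathcal T$ with non-root vertex set $V$ (so $|V|=O(n)$), encode a configuration by the vector $x\in\{0,\dots,k\}^V$ where $x_v$ is the number of taxis in the subtree below~$v$. Under this encoding the transportation metric becomes exactly the weighted $\ell_1$ distance $d(x,y)=\sum_v w_v|x_v-y_v|$, and --- this is the key point you are missing --- a request $(s,d)$ acts as a \emph{translation} $x\mapsto x+v_{s,d}$ by a fixed $\{-1,0,1\}$-vector, not merely an isometry. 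For translations on a weighted $\ell_1$ grid the paper has an explicit extension: the torus $\{0,\dots,2k-1\}^V$ with the wrap-around $\ell_1$ metric (this is the $\{0,\dots,k\}^D$ row of \pref{thm:bbounds}), on which every such translation extends to the automorphism $x\mapsto x+v_{s,d}\bmod 2k$. Then $\log|\hat M|=|V|\log(2k)=O(n\log k)$, and running an $O(\log^2|\hat M|)$-competitive MSS algorithm gives the $O((n\log k)^2)$ bound on trees.

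Your suggested route --- add virtual leaves to the HST so that the relevant partial isometries extend --- is unlikely to stay within $O(n)$ leaves. Making an HST symmetric enough that leaf-permuting automorphisms exist is exactly the ultrametric blow-up question, and \pref{thm:bbounds} shows that can cost $2^{n-1}$ leaves in the worst case; restricting to request-induced isometries does not obviously help, since over many requests the sources and destinations can range over all leaves. The essential idea you are missing is to exploit the additional structure that on a tree these isometries are \emph{translations} in a suitable coordinate system, which is what makes a polynomial-size (in fact $(2k)^{O(n)}$) extension possible.
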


This result is better than the previous best bound of $O(2^k\log n)$ whenever $n$ is subexponential in $k$~\cite{CoesterK19}.

\paragraph{Extending partial isometries.}

As a basic tool to tackle T-MSS with general transformation, we study the problem when the allowable transformations are isometries. A map $f\colon A\to B$ for subsets $A,B\subseteq M$ is called a \emph{partial isometry of $M$} if it is distance-preserving, i.e., $d(f(x), f(y))= d(x, y)$. A metric space $M$ is called \emph{ultrahomogeneous} if every partial isometry of $M$ extends to an automorphism of $M$. Notice that on ultrahomogeneous metric spaces, the competitive ratio of T-MSS with isometry transformations is the same as the competitive ratio of MSS: Indeed, when a partial isometry $f_t\colon A_t\to B_t$ arrives, let $\hat f_t$ be its extension to an automorphism of $M$. In MSS, this request corresponds to the request $A_t$ followed by a renaming of the points of the metric space according to $\hat f_t$. Clearly, isometric renaming of points does not affect the competitive ratio.

If $M$ is not ultrahomogeneous, one could hope to extend $M$ to a larger metric space $\hat M$ such that every partial isometry of $M$ extends to an automorphism of $\hat M$. In this case, we call $\hat M$ a \emph{weakly ultrahomogeneous extension} of $M$. For a family of metric spaces $\cM$ we define the \emph{blow-up} as the supremum over all $n$-point metrics $M\in \cM$ -- of the minimum cardinality of a weakly ultrahomogeneous extension $\hat M$ of $M$.
If the blow-up can be bounded as a function of $n$ it allows us to apply the $(n-1)$-competitive deterministic algorithm or the $O(\log^2n)$-competitive randomized algorithm for MSS on the weakly ultrahomogeneous extension to get competitive algorithms. We study the blow-up also for restricted families of isometries. For example, we study swap isometries (defined on two points that are mapped to one another). We call a metric space \emph{swap-homogeneous} if every swap extends to an automorphism. Swap-homogeneous metric spaces have the intuitive property that the metric space ``looks'' the same at each point. Similar notions of metric space homogeneity have also been studied in other contexts (see, e.g., \cite{H52}). The main question, which we consider to be of independent interest, is thus:

\begin{open}
What is the blow-up for interesting families of metric spaces and partial isometries?
\end{open}

Some results on weakly ultrahomogeneous extensions already exist, in particular that every finite metric space has such an extension and it is of finite size \cite{solecki05,vershik08}. However, no bounds on the size of the extensions as a function of $n$ are known, and hence these results do not yield any bounds on the blow-up. We refer the reader to Section \ref{sec:blowup} for further discussion. Some of the algorithms we design for T-MSS are using these new upper bounds that we prove on the blow-up. The following theorem summarizes the upper and lower bounds we obtain.

\begin{restatable}{theorem}{bbounds}
\label{thm:bbounds}
The following bounds on the blow-up are tight.
\begin{center}
\begin{tabular}{l|l|l}
Family of metric spaces & Family of isometries & Blow-up \\
\hline
Ultrametrics & General& $2^{n-1}$  \\
Ultrametrics: $k$ distinct non-zero distances & General & $\approx\left(\frac{n+k-1}{k}\right)^{k}$\footnote{The precise blow-up is $(a+1)^b a^{k-b}$ where $a=\lfloor\frac{n+k-1}{k}\rfloor$ and $b=(n-1)\bmod k$.} \\
Equally spaced points on a line &  General & $2n-2$ \\
$\left(\{0,\dots,k\}^D,\text{weighted } \ell_1\right)$ & Translations & $(2k)^D$ \\
General metrics & Swaps & $2^{n-1}$
\end{tabular}
\end{center}
\end{restatable}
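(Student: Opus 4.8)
The plan is to treat the five rows separately, all via the same two-part template: an explicit ``maximally homogeneous'' space realizing the upper bound, and an \emph{orbit argument} for the lower bound, where one fixes a single partial isometry of $M$ whose forced extension is an automorphism $\psi$ of $\hat M$ and then exploits that $\psi$ preserves every distance to conclude that the $\langle\psi\rangle$-orbit of a point of $M$ is already large. \emph{Ultrametrics (rows 1 and 2).} Represent a finite ultrametric by its dendrogram: a rooted tree with one level per distinct nonzero distance, leaves labelled by the points, and $d(x,y)$ read off the level of the least common ancestor. Every partial isometry of an ultrametric maps induced sub-dendrograms isomorphically, and one checks that the leaf set of a \emph{spherically symmetric} tree --- all nodes at a given level have the same number of children --- is ultrahomogeneous. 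Hence the smallest weakly ultrahomogeneous extension of an ultrametric $M$ is the leaf set of the smallest spherically symmetric tree containing the dendrogram of $M$, with $\prod_{i=1}^k\beta_i$ leaves, $\beta_i$ being the maximal number of children of a level-$i$ node of $M$. A counting argument gives $\beta_i\ge 2$ and $\sum_i\beta_i\le n+k-1$ for any such $M$, and conversely any vector obeying these bounds is realized (one ``fat'' node of the required degree per level, the remaining subtrees padded by thin chains). So the blow-up is $\max\{\prod_i b_i : b_i\ge 2,\ \sum_i b_i\le n+k-1\}$, which by convexity is attained at the most balanced vector and equals $(a+1)^b a^{k-b}$ with $a=\lfloor(n+k-1)/k\rfloor$, $b=(n-1)\bmod k$; maximizing over $k$ selects $k=n-1$ and the value $2^{n-1}$ for unrestricted ultrametrics. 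To get the lower bound one must also exclude smaller extensions that are not ultrametrics, which is exactly where the orbit argument enters: applied to the sub-dendrogram isomorphisms of the extremal $M$ and the orbit of a fixed leaf, it forces $\prod_i\beta_i$ points.

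\emph{Line and grid (rows 3 and 4).} For $n$ equally spaced points on a line, the matching space is the cycle $C_{2n-2}$ with its path metric: it embeds the $n$ points isometrically ($2n-2$ is large enough to avoid wraparound) and is ultrahomogeneous. For the lower bound the unit shift $i\mapsto i+1$ on $\{0,\dots,n-2\}$ extends to an automorphism $\psi$ of $\hat M$; writing the $\langle\psi\rangle$-orbit of $0$ as a cyclic sequence $o_0,\dots,o_{m-1}$, the automorphism property forces $d(o_s,o_t)=g(t-s)$ for a function $g$ on $\mathbb Z/m$ with $g(\ell)=g(m-\ell)$; since also $g(\ell)=\ell$ for $0\le\ell\le n-1$, matching the two descriptions on their overlap requires $m\ge 2n-2$. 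The grid $\{0,\dots,k\}^D$ under weighted $\ell_1$ is handled identically, with the discrete torus $(\mathbb Z/2k)^D$ (weighted cyclic $\ell_1$) as the upper-bound space and the $D$ shift automorphisms $\psi_1,\dots,\psi_D$ running the cyclic-orbit argument in each coordinate direction.

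\emph{Swaps (row 5).} A swap of any pair is automatically a partial isometry, so we need an automorphism of $\hat M$ exchanging every pair. Upper bound: given $M=\{p_1,\dots,p_n\}$, take $\hat M=(\mathbb Z/2)^{n-1}$ with the translation-invariant metric $d(u,v)=\rho(u\oplus v)$, where $\rho(w)$ is the cheapest way to write $w$ as a sum of basis vectors $e_i$ (at cost $d(p_i,p_n)$) and pairs $e_i\oplus e_j$ (at cost $d(p_i,p_j)$); the triangle inequality in $M$ together with the identity $e_i\oplus e_j=(e_i\oplus e_k)\oplus(e_k\oplus e_j)$ makes $p_i\mapsto e_i$, $p_n\mapsto 0$ an isometric embedding, and since every translation $x\mapsto x\oplus c$ is an automorphism exchanging $u$ and $v$ when $c=u\oplus v$, this $2^{n-1}$-point space is swap-homogeneous. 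Lower bound: for a suitable $n$-point metric (e.g.\ one with rationally independent distances, where the automorphism swapping $p_i$ with $p_j$ must move each other $p_k$ to a fresh point), composing the required swapping automorphisms forces a $(\mathbb Z/2)^{n-1}$-like orbit of $\ge 2^{n-1}$ points.

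\emph{Main obstacle.} The upper bounds are routine once the canonical space is identified --- verifying homogeneity and the isometric embedding of $M$ is mechanical. The real work is the lower bounds, i.e.\ promoting ``one forced automorphism'' into the full count of forced points: immediate for the line; for the grid it must be shown that the commuting shift automorphisms jointly produce a full copy of $(\mathbb Z/2k)^D$ inside the orbit; and for the swap and ultrametric rows one must show that composing many distinct forced automorphisms keeps creating new points all the way up to $2^{n-1}$ --- this, together with pinning down the extremal instances, is the crux.
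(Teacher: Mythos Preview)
Your upper-bound constructions coincide with the paper's (spherically symmetric ultrametric tree, $(2n-2)$-cycle, discrete torus $(\mathbb Z/2k)^D$, and an $\mathbb F_2^{n-1}$ embedding for swaps --- the paper uses $p_k\mapsto(1^{k-1},0^{n-k})$ rather than your $p_i\mapsto e_i$, but the Eulerian-path verification of the triangle inequality is the same). Your cyclic-orbit argument for the line is a clean self-contained alternative; the paper instead derives this lower bound as the $D=1$ case of the grid argument.

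The substantive divergence, and the real gap, is in the lower bounds for the grid and for swaps. For the grid you propose to extend the $D$ unit shifts to automorphisms $\psi_1,\dots,\psi_D$ of $\hat M$ and then argue that their joint orbit has size $(2k)^D$. But nothing forces the $\psi_i$ to commute off $M$: the extension of a partial isometry to an automorphism of $\hat M$ is not canonical, and two such extensions can interact arbitrarily outside the original domain. So ``running the cyclic-orbit argument in each coordinate direction'' does not obviously multiply, and you flag this yourself without resolving it. The paper sidesteps orbits entirely: it picks the $2^D$ translations $f_{i_1\cdots i_D}\colon x\mapsto x+(k-1)(i_1,\dots,i_D)$, extends each to an automorphism $\hat f_{i_1\cdots i_D}$, and shows directly that the $2^D$ image sets $\hat f_{i_1\cdots i_D}(C_{i_1}\times\cdots\times C_{i_D})$ (each of size $k^D$, with $C_0=\{0,\dots,k-1\}$, $C_1=\{1,\dots,k\}$) are pairwise disjoint, by checking that membership determines the bit-string $i_1\cdots i_D$ via comparison of distances to two fixed anchor points $k\cdot\mathbf 1$ and $k\cdot\mathbf 1-e_j$.

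For swaps and for ultrametrics, the paper's lower bounds are also not orbit arguments. The ultrametric lower bound fixes the extremal instance $M=M_0\dot\cup\cdots\dot\cup M_k$ with distances $2\cdot3^{j-1}$, and uses only that each single-point map $p_0\mapsto p$ extends to an automorphism; this forces the balls $B_{j-1}(p)$ for $p\in M_j\cup\{p_0\}$ to be disjoint and equicardinal, giving $|B_j(p_0)|\ge\prod_{i\le j}(|M_i|+1)$ by induction. The swap lower bound is then \emph{free}: a swap $p_0\leftrightarrow p$ already yields an automorphism taking $p_0$ to $p$, so the same ultrametric instance and the same ball-counting give $2^{n-1}$ for swap-homogeneous extensions. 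Your ``rationally independent distances'' route is a different instance and a different mechanism, and as you note it would still require proving that the composed swap-automorphisms generate $2^{n-1}$ distinct images --- an obstacle the paper never has to face.
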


Using the direct reduction above, we may get directly some interesting results for T-MSS when transformations are isometries. For example, a randomized $O(n^2)$-competitive algorithm for ultrametrics and an $O(\log^2 n)$-competitive algorithm for equally spaced points on a line.

\paragraph{The work function algorithm.}

The \emph{work function algorithm} (\wfa) is a classical algorithm that achieves the optimal deterministic competitive ratio of $n-1$ for MSS on any $n$-point metric (see \cite{BorodinE98} for a discussion on its history). The algorithm extends naturally to T-MSS (see Section \ref{sec:workf}), and is a natural candidate algorithm to investigate. We prove that it is in fact optimal for several special cases of the problem. The following result for ultrahomogeneous ultrametrics is also used as part of our main algorithm for general metrics. We also prove that on general metrics, \wfa has a superlinear competitive ratio even for isometry transformations (which are $1$-Lipschitz). This may indicate that T-MSS has super linear competitive ratio even for isometry transformations.

\begin{theorem}\label{thm:workf}
The work function algorithm (\wfa) for T-MSS has competitive ratio
\begin{itemize}
\item $n-1$ on $n$-point ultrahomogeneous ultrametrics with $1$-Lipschitz transformations.
\item $2n-3$ on $n$-point metric spaces with swap transformations. No deterministic algorithm has competitive ratio better than $2n-3$ for this problem.
\item $\omega(n^{1.29})$ on some $n$-point metric space with isometry transformations, for each $n$.
\end{itemize}
\end{theorem}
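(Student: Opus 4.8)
The plan is to treat the three claims separately, since they concern quite different regimes.

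First, for the bound of $n-1$ on $n$-point ultrahomogeneous ultrametrics with $1$-Lipschitz transformations, I would run the standard potential-function analysis of \wfa, but carried out on the work function itself rather than on a fixed offline solution. Let $w_t(\cdot)$ denote the work function after request $t$ (the optimal offline cost to end in a given state having served the first $t$ requests), adapted to T-MSS as in Section~\ref{sec:workf}. The key structural facts I expect to need are: (i) on an ultrametric, balls form a laminar family, so the distance function is especially rigid; (ii) ultrahomogeneity lets us assume the transformations only ever ``rename'' points within an isometry-orbit, so that effectively we are analyzing plain MSS on the ultrametric but with the twist that the $1$-Lipschitz transformations can only contract (never expand) distances, which means the work function can only get ``flatter'' over time. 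I would then show that the quasiconvexity/supermodularity-type properties of the work function that power the classical $(n-1)$-competitiveness of \wfa for MSS survive under these contracting renamings, via an induction on $t$ that maintains the invariant $w_t(x) - w_t(y) \le d(x,y)$ together with an extra invariant controlling how the "support" of near-optimal states shrinks. The first claim's matching lower bound is just the MSS lower bound of \cite{ManasseMS88} specialized to identity transformations on any $n$-point ultrahomogeneous ultrametric (e.g.\ the uniform metric, which is such a space).

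Second, for swap transformations, both directions are needed. For the upper bound $2n-3$: a swap request $f_t$ swapping points $p,q$ (with feasible set $\{p,q\}$) followed by its own relabeling is an involution, so after serving it the algorithm sits at the image of $p$ or $q$; I would again run \wfa and argue that in the work-function potential argument, each swap can increase the relevant potential by at most a bounded amount, and that over the course of the instance the potential telescopes to give ratio at most $2n-3$ — essentially the MTS bound $2n-1$ (since a swap followed by relabeling behaves like an MTS request that is $0$ on $\{p,q\}$ and $\infty$ elsewhere, up to the renaming) but with a more careful accounting on $n$-point metrics that shaves the constant to $2n-3$. For the matching lower bound ``no deterministic algorithm beats $2n-3$'': I would construct an adversarial sequence of swaps. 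Using Theorem~\ref{thm:bbounds}, the blow-up for swaps is $2^{n-1}$, and the idea is that swaps let the adversary effectively ``walk'' the hard MTS/MSS instance around the blown-up space; more concretely, I would exhibit a metric (a suitable weighting of a path or star) and an adversary that always presents the swap avoiding the algorithm's current point, forcing movement $\ge$ something each step while keeping offline cost bounded, then chain $2n-3$ such phases.

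Third, the $\omega(n^{1.29})$ lower bound for \wfa under isometry transformations is where I expect the real difficulty. The exponent $1.29 \approx \log_2 2.45$ or similar suggests a recursive/self-similar construction: build a metric space $M_k$ on roughly $c^k$ points out of $\beta$ scaled copies of $M_{k-1}$ glued appropriately, together with a carefully chosen family of partial isometries permuting the copies, such that \wfa's work function can be driven to a state where \wfa is forced to pay $\gamma^k$ while \opt pays $O(1)$ per ``level,'' with $\gamma > \beta$; then set $n = |M_k|$ and read off the ratio $\gamma^k = n^{\log_\beta \gamma}$. The crux is to pin down \wfa's behavior exactly on this recursive instance — since \wfa is defined via the global work function, one must show that the isometry relabelings cause the work function at level $k$ to ``forget'' information in a way that makes \wfa move suboptimally, and that this bad behavior composes across levels. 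I would try to make the base case a small explicit metric (found by hand or by a computer search over $4$- or $5$-point metrics with a few isometries) on which \wfa's ratio strictly exceeds the number of points, and then amplify by the recursion; verifying the base case and checking that the amplification multiplies (rather than merely adds) the overhead is the main obstacle, and is presumably why the paper only claims $\omega(n^{1.29})$ rather than a clean closed form. Throughout, I'd lean on the fact (implicit in the isometry-reduction discussion preceding Theorem~\ref{thm:bbounds}) that an isometry request is an MSS request composed with an isometric renaming, so \opt's cost is unaffected by the renamings and only the algorithm can be fooled.
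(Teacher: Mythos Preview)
Your outline has the right overall shape for all three parts (potential-function analysis of \wfa via \pref{lem:extCost}; a recursive self-similar metric for the superlinear lower bound), but there are two concrete gaps that would prevent the argument from closing.

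For the first bullet, your use of ultrahomogeneity is wrong. You say it lets you treat the transformation as a renaming and reduce to plain MSS, but that reduction only works for \emph{isometries}; here the transformations are genuinely $1$-Lipschitz, so the image can be a strictly contracted set and no global renaming exists. The paper uses ultrahomogeneity for a different, purely combinatorial fact: in an ultrahomogeneous ultrametric every ball of a given radius has the same cardinality. With the potential $\Phi(w)=\sum_{p\in M}w(p)$, the only nontrivial condition is \eqref{eq:PotIncr}, i.e.\ that applying a $1$-Lipschitz $f_t$ does not decrease $\sum_p w(p)$. The paper proves this by comparing sublevel sets: the $y$-sublevel set of $w_t$ is a disjoint union of balls around points of $B_t$, and by $1$-Lipschitzness of $f_t$ together with the laminarity of ultrametric balls, the corresponding balls around the preimages in $A_t$ are also disjoint; equal-cardinality-of-balls then gives $|w_t^{-1}([0,y])|\le |(w_t^-)^{-1}([0,y])|$ for every $y$, hence $\sum_p w_t(p)\ge \sum_p w_t^-(p)$. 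Your ``work function gets flatter'' intuition actually points the wrong way here --- the support contracts, so most points get \emph{further} from the support and the sum goes \emph{up}.

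For the second bullet's upper bound, the MTS analogy you invoke gives $2n-1$, not $2n-3$, and ``more careful accounting'' does not close this gap by itself: with the plain potential $\sum_p w(p)$ the bound \eqref{eq:PotNumberWfs} genuinely has $\rho_M=n$, and swaps can make \eqref{eq:PotIncr} tight, so you are stuck at $2n-1$. The paper adds a second, nontrivial term to the potential, $\Phi(w)=\sum_p w(p)+\min_{x,y}\Psi_{x,y}(w)$ with $\Psi_{x,y}(w)=\cl(M{-}x{-}y)+\sum_{p\ne x,y}\min\{w(x)+yp,\,w(y)+xp\}$, and the heart of the proof is showing that when $w$ is supported on the swap pair $\{a,b\}$, the minimizer is attained at $\{x,y\}=\{a,b\}$, after which the swap leaves $\Phi$ exactly invariant. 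This extra term is what buys the improvement to $\rho_M=2n-2$. For the lower bound, invoking the blow-up of \pref{thm:bbounds} is a red herring (that theorem yields \emph{upper} bounds on the competitive ratio); the paper instead uses a concrete star metric, issues $n-1$ ordinary identity requests that chase the algorithm through the leaves for cost $1+2(n-2)=2n-3$, and then a single swap to reset the offline to the start for free. Your Part~3 sketch is essentially the paper's plan: the metric is four copies of the previous level (so $n=4^h$), the ratio per level is $6$, and the exponent is $\log_4 6\approx 1.292$; no computer-search base case is needed, since the amplification itself produces the gap.
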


\subsection{Organization}
In Section \ref{sec:pre} we formally define T-MSS, and discuss the work function algorithm (\wfa).
In Section \ref{sec:lip} we design an algorithm for T-MSS on general metrics with competitive ratio depending on the maximal Lipschitz constant of transformations, and prove an almost matching lower bound (Theorem \ref{thm:cbounds}). 
As part of the algorithm we also show that \wfa is $(n-1)$-competitive for ultrahomogeneous $n$-point ultrametrics and $1$-Lipschitz transformations, proving the first part of Theorem~\ref{thm:workf}.

In Section \ref{sec:special} we show upper and lower bounds on the competitive ratio for several special cases of T-MSS.
In Section \ref{sec:convebody} we show that there exists no online algorithm with finite competitive ratio for nested convex body chasing with contractions in the plane, even with randomization (Theorem \ref{thm:cbc}).
In Section \ref{sec:taxi} we design a new randomized algorithm for the $k$-taxi problem (Theorem \ref{thm:ktaxi}). In Section \ref{sec:swap} we show matching upper and lower bounds for swap transformations, proving the second part of Theorem \ref{thm:workf}. In Section \ref{sec:superl} we show a superlinear lower bound on the competitiveness of \wfa for isometry transformations, proving the third part of Theorem \ref{thm:workf}.
Finally, in Section \ref{sec:blowup} we prove upper an lower bounds on the blow-up for several families of metric spaces and transformations (Theorem \ref{thm:bbounds}). These results are also used earlier in the proofs of Theorem~\ref{thm:cbounds} and Theorem \ref{thm:ktaxi}.

\section{Preliminaries}\label{sec:pre}
In \emph{Metrical Service Systems with Transformations} (T-MSS), we are given a metric space $(M,d)$ and an initial state $b_0\in M$. We denote by $n$ the number of points in the metric space. In each round $t$, we get a function (transformation) $f_t\colon A_t\to B_t$ that maps a subset $A_t\subseteq M$ of feasible states to a subset $B_t\subseteq M$. If $b_{t-1}\in M$ is the state of the algorithm before $f_t$ arrives, it has to choose one of the feasible states $a_t\in A_t$ and pays movement cost $d(b_{t-1},a_t)$. The new state of the algorithm is then $b_{t}:=f_t(a_t)$. The classical MSS problem is thus a special case of T-MSS where $f_t$ is the identity function on the set of feasible states at time $t$. In T-MSS we always allow the identity transformations (thereby ensuring that T-MSS is a generalization of MSS) and in addition more complex transformations. We study the following important families of transformations $f:A \to B$ whose properties are defined below.

\begin{center}
\begin{tabular}{l|l}
Family of transformations & Condition\\
\hline
$\alpha$--Lipschitz & $\forall x,y\in A\colon d(f(x), f(y)) \leq \alpha \cdot d(x, y)$\\
$1$--Lipschitz (Contractions) & $\forall x,y\in A\colon d(f(x), f(y)) \leq d(x, y)$\\
Isometries & $\forall x,y\in A\colon d(f(x), f(y)) = d(x, y)$ \\
Swaps  & $A=\{a,b\}$, $f(a)=b, f(b)=a$\\
Translations & $M$ is subset of a vector space. $\forall x\in A\colon f(x) = x+v$ for a vector $v$
\end{tabular}
\end{center}

We also study several families of metric spaces. An important family of metric spaces are ultrametrics in which for every three points $x,y,z\in M$, $d(x,z) \leq \max\{d(x,y), d(y,z)\}$.
Ultrametric spaces may be viewed as the leaves of a rooted tree in which vertices with lowest common ancestor at level $i$ have distance $L_i$, where $0<L_1<L_2<\dots<L_k$ are the possible distances (where $k\leq n-1$).

\subsection{The work function algorithm for T-MSS}\label{sec:workf}
The \emph{work function algorithm} (\wfa) achieves the optimal deterministic competitive ratio of $n-1$ for MSS on any $n$-point metric. This algorithm is defined as follows: Denote by $p_0$ the fixed initial state. For some request sequence and a state $p\in M$, let $w_t(p)$ be the minimal cost to serve the first $t$ requests and then end up at $p$. The function $w_t$ is called the \emph{work function} at time $t$. We denote by $\cW$ the set of all maps $w\colon M\to\R_+$ that are $1$-Lipschitz. Notice that every work function is in $\cW$. The \wfa for MSS is the algorithm that, at time $t$, goes to a feasible state $p_{t}$ minimizing $w_t(p_t)+d(p_t,p_{t-1})$.

This algorithm extends naturally to T-MSS: Let $f_t\colon A_t\to B_t$ be the $t$th transformation. Let $w_t$ be defined as above and let $w_t^-(p)$ be the minimal cost of serving the first $t-1$ requests, then moving to some state in $A_t$ and then moving to $p$. Let $b_{t-1}$ be the state of the algorithm before time $t$. Upon the arrival of $f_t$, the \wfa first goes to a state
\begin{align*}
a_t\in\arg\min_{a\in A_t} w_t^-(a)+d(a,b_{t-1})
\end{align*}
and is then relocated to $b_t:=f_t(a_t)$.

We say that a work function $w\in \cW$ is \emph{supported} on a set $S\subseteq M$ if $w(x)=\min_{s\in S}w(s)+sx$ for each $x\in M$. The (unique) minimal such set $S$ is called the \emph{support} of $w$. Notice that $w_t^-$ is supported on $A_t$ and $w_t$ on $B_t$.

The following lemma is a variant of a lemma that is ubiquitous in analyses of \wfa for other problems~\cite{ChrobakL91}, adapted here to T-MSS:

\begin{lemma}\label{lem:extCost}
	Let $M$ be a metric space. Suppose there is a map $\Phi\colon\cW\to\R_+$ such that for any $x\in M$, time $t\ge 1$, $w\in\cW$, and any sequence of work functions $w_0,w_1^-,w_1,w_2^-,w_2,\dots$ arising for (a subclass of) T-MSS on $M$,
	\begin{align}
	w_t^-(x) - w_{t-1}(x)&\le \Phi(w_t^-)-\Phi(w_{t-1})\label{eq:PotBoundsExtCost}\\
	\Phi(w_t^-)&\le\Phi(w_t)\label{eq:PotIncr}\\
	\Phi(w)&\le \rho_M\cdot \min_{p\in M} w(p) + C_M,\label{eq:PotNumberWfs}
	\end{align}
	where $\rho_M$ and $C_M$ are constants depending only on $M$. Then \wfa is $(\rho_M-1)$-competitive for (this subclass of) T-MSS on $M$.
\end{lemma}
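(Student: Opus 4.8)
The plan is to run a standard potential/work-function argument, using $\Phi$ as an amortized bound on the work function values. The core identity we rely on is that for any algorithm serving the request sequence, and in particular for \wfa, the movement cost telescopes against changes of the work function: when \wfa sits at $b_{t-1}$ and the request $f_t\colon A_t\to B_t$ arrives, it moves to $a_t\in\arg\min_{a\in A_t} w_t^-(a)+d(a,b_{t-1})$ and is relocated to $b_t=f_t(a_t)$. First I would record the two elementary facts about work functions in this setting: (i) $w_t^-$ is supported on $A_t$ and $w_t$ on $B_t$, with $w_t\ge w_t^-$ pointwise (serving one more request cannot decrease the optimal cost), and $w_t(f_t(a))\le w_t^-(a)$ for $a\in A_t$ since one way to end at $f_t(a)$ is to end the first $t-1$ requests somewhere, move into $A_t$ via $w_t^-$, then serve $f_t$; and (ii) the standard ``quasi-convexity'' / support argument giving that \wfa's movement at step $t$ is bounded by a controlled quantity. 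The cleanest route is the telescoping one: \wfa's total movement cost is $\sum_t d(b_{t-1},a_t)$, and one shows $d(b_{t-1},a_t)\le \bigl(w_t^-(a_t)-w_{t-1}(a_t)\bigr) + \bigl(\text{something that telescopes to a constant}\bigr)$, or more precisely that $w_t^-(a_t) = \min_{a\in A_t}\bigl(w_t^-(a)+d(a,b_{t-1})\bigr) - d(b_{t-1}, ?)$-type manipulations let one bound movement by increments of the work function evaluated along the algorithm's own trajectory.

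The heart of the argument is then to combine this with the hypotheses. Using \eqref{eq:PotBoundsExtCost} we get $w_t^-(x)-w_{t-1}(x)\le \Phi(w_t^-)-\Phi(w_{t-1})$ for the relevant $x$ (namely $x=a_t$, or the appropriate support point), so summing over $t$ the work-function increments along \wfa's path are bounded by $\sum_t \bigl(\Phi(w_t^-)-\Phi(w_{t-1})\bigr)$. By \eqref{eq:PotIncr}, $\Phi(w_{t})\ge\Phi(w_t^-)$, so $\Phi(w_t^-)-\Phi(w_{t-1}) = \Phi(w_t^-)-\Phi(w_{t-1}) \le \bigl(\Phi(w_t) - \Phi(w_{t-1})\bigr) + \bigl(\Phi(w_t^-)-\Phi(w_t)\bigr) \le \Phi(w_t)-\Phi(w_{t-1})$ after we insert the intermediate terms correctly — the point is that the sequence $\Phi(w_0),\Phi(w_1^-),\Phi(w_1),\Phi(w_2^-),\Phi(w_2),\dots$ is such that the ``$-$ to $+$'' steps only increase $\Phi$, so the sum of the ``$+$ of previous step to $-$ of current step'' increments telescopes and is at most $\Phi(w_T^-)-\Phi(w_0)$ for the final time $T$. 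Thus \wfa's cost is at most $\Phi(w_T^-)-\Phi(w_0) + O(1)$, up to the lower-order telescoping remainder from the movement-bound step.

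Finally I would invoke \eqref{eq:PotNumberWfs}: $\Phi(w_T^-)\le \rho_M\cdot\min_{p} w_T^-(p) + C_M$. Now $\min_p w_T^-(p)$ is at most the optimal cost $\opt$ of serving all $T$ requests (it is the optimal cost of serving the first $T-1$ requests and then moving into $A_T$, which is no more than $\opt$ since $\opt$ itself ends at some $b_T=f_T(a_T)$ having passed through $a_T\in A_T$; and $\min_p w_T^-(p) \le \min_p w_T(p) \le \opt$ suffices). Also $\Phi(w_0)\ge 0$. Subtracting off one copy of $\min_p w_T^-(p)$ that \wfa ``spends'' on reaching its own final configuration (this is the standard reason the ratio is $\rho_M-1$ rather than $\rho_M$: the work function value at \wfa's endpoint is already paid for by \wfa's movement, so it should be credited against cost), we conclude $\cost(\wfa)\le (\rho_M-1)\,\opt + O(C_M)$, i.e. \wfa is $(\rho_M-1)$-competitive.

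The main obstacle is the bookkeeping in the movement-bound step: making precise the inequality that \wfa's one-step movement $d(b_{t-1},a_t)$ is dominated by $w_t^-(a_t)-w_{t-1}(a_t)$ plus terms that telescope to a bounded amount, and simultaneously pinning down at which point of the metric one should apply hypothesis \eqref{eq:PotBoundsExtCost} (the clean choice is $x=a_t$, using that $a_t$ minimizes $w_t^-(a)+d(a,b_{t-1})$ over $A_t$ and that $w_t^-$ is supported on $A_t$, so $w_t^-(a_t)\le w_t^-(s)+d(s,a_t)$ for all $s$, and combining with $w_{t-1}(b_{t-1})$ being small since $b_{t-1}$ is the previous \wfa position). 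Getting the constant additive losses to stay additive (not multiplicative) is the only genuinely delicate part; the rest is the three-line telescoping of $\Phi$ afforded by \eqref{eq:PotBoundsExtCost}--\eqref{eq:PotNumberWfs}.
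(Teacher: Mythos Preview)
Your overall plan---bound \wfa's movement by work-function increments, telescope $\Phi$ via \eqref{eq:PotBoundsExtCost} and \eqref{eq:PotIncr}, then invoke \eqref{eq:PotNumberWfs}---is exactly the paper's. But the step you yourself flag as ``the main obstacle'' is where the sketch actually breaks: you propose applying \eqref{eq:PotBoundsExtCost} at $x=a_t$, and you never write down the one identity that makes the bookkeeping go through.

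The correct evaluation point is $x=b_{t-1}$, not $a_t$. Because $w_t^-$ is supported on $A_t$ and $a_t$ realizes the minimum in the \wfa rule, we have
\[
w_t^-(b_{t-1})=\min_{a\in A_t}\bigl(w_t^-(a)+d(a,b_{t-1})\bigr)=w_t^-(a_t)+d(a_t,b_{t-1})=w_t(b_t)+d(a_t,b_{t-1}),
\]
so $d(b_{t-1},a_t)=w_t^-(b_{t-1})-w_t(b_t)$. Summing and inserting $\pm w_{t-1}(b_{t-1})$,
\[
\cost_{\wfa}=\sum_{t}\bigl[w_t^-(b_{t-1})-w_{t-1}(b_{t-1})\bigr]\;+\;\bigl[w_0(b_0)-w_T(b_T)\bigr].
\]
Now \eqref{eq:PotBoundsExtCost} at $x=b_{t-1}$ together with \eqref{eq:PotIncr} telescopes the first sum to at most $\Phi(w_T)-\Phi(w_0)\le\Phi(w_T)$, and $w_0(b_0)=0$. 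Hence $\cost_{\wfa}\le\Phi(w_T)-w_T(b_T)$. The ``subtracted copy'' you were reaching for is precisely this $-w_T(b_T)\le-\min_p w_T(p)$ term: it falls out of the telescoping automatically, not by a separate accounting argument. Combined with \eqref{eq:PotNumberWfs} this gives $(\rho_M-1)\min_p w_T(p)+C_M$.

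By contrast, applying \eqref{eq:PotBoundsExtCost} at $x=a_t$ does not lead anywhere clean: there is no inequality of the form $d(b_{t-1},a_t)\le w_t^-(a_t)-w_{t-1}(a_t)+(\text{telescoping})$, since $a_t$ varies from step to step and is unrelated to $b_{t-1}=f_{t-1}(a_{t-1})$. The observations you list about $a_t$ (that it minimizes $w_t^-(a)+d(a,b_{t-1})$, that $w_t^-$ is supported on $A_t$) are exactly what yield the identity above at $b_{t-1}$; they do not produce a usable bound at $a_t$.
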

\begin{proof}
	We have
	\begin{align}
	w_t^-(b_{t-1})&=\min_{a\in A_t}w_t^-(a)+d(a,b_{t-1})\nonumber\\
	&= w_t^-(a_t)+d(a_t,b_{t-1})\nonumber\\
	&= w_t(b_t)+d(a_t,b_{t-1}),\label{eq:wfRec}
	\end{align}
	where the first equation is by definition of $w_t^-$ and the second equation by definition of $a_t$. The cost of \wfa is
	\begin{align*}
	\cost_\wfa&=\sum_{t=1}^T d(b_{t-1},a_t)\nonumber\\
	&= \sum_{t=1}^T \left(w_t^-(b_{t-1}) - w_t(b_t)\right)\nonumber\\
	&= \sum_{t=1}^T \left(w_t^-(b_{t-1}) - w_{t-1}(b_{t-1}) + w_{t-1}(b_{t-1}) - w_t(b_t)\right)\nonumber\\
	&\le \Phi(w_T) - w_T(b_T)\nonumber\\
	&\le (\rho_M-1)\cdot\min_{p\in M} w_T(p) + C_M,
	\end{align*}
	where the second equation follows from \pref{eq:wfRec}, the first inequality uses \pref{eq:PotBoundsExtCost}, \pref{eq:PotIncr}, $\Phi(w_0)\ge 0$ and $w_0(b_0)=0$, and the second inequality uses \pref{eq:PotNumberWfs}. Since $\min_{p\in M} w_T(p)$ is the optimal offline cost, the lemma follows.
\end{proof} 
\section{Competitivity for Lipschitz Transformations}\label{sec:lip}

In this Section we prove Theorem \ref{thm:cbounds}.

\cbounds*

The proof of the upper bound consists of three main steps: First we give a reduction to the case of $1$-Lipschitz transformations in ultrametrics. Then we employ the fact that ultrametrics admit an ultrahomogeneous extension of size $2^{n-1}$, which will be proved later in \pref{sec:ultrametrics}. Finally, we show that the \wfa achieves the optimal competitive ratio of $n-1$ for this special case of $1$-Lipschitz transformations on ultrahomogeneous ultrametrics.

\subsection{From $\alpha$-Lipschitz in general metrics to $1$-Lipschitz in ultrametrics}
For two metrics $d$ and $\hat d$ defined on a set $M$, we say that $\hat d$ is an \emph{$\alpha$-distortion} of $d$, for $\alpha\geq 1$, if for any $x,y\in M$ we have $d(x,y)\leq \hat d(x,y)\leq \alpha\cdot d(x,y)$.

\begin{lemma}\label{lem:distort}
Fix a constant $\alpha\geq 2$ and an $n$-point metric space $(M,d)$. There is an \emph{ultrametric} $\hat d$ on $M$ that is an $(\alpha+1)^{n-2}$-distortion of $d$, and any transformation $f\colon A\subseteq M\to M$ that is $\alpha$-Lipschitz with respect to $d$ is $1$-Lipschitz with respect to $\hat d$.
\end{lemma}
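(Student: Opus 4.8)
The plan is to define $\hat d$ by a single-linkage / minimax-path construction. For $x,y\in M$ let
\[
\hat d(x,y)\;=\;\min_{x=z_0,z_1,\dots,z_m=y}\ \max_{i} \bigl(\alpha+1\bigr)^{\,?}\cdot d(z_{i-1},z_i),
\]
wait — more precisely, I would build $\hat d$ as a {\em hierarchically well-separated tree (HST)} on the points of $M$. Sort the $\binom n2$ pairwise distances, and run Kruskal-style agglomerative clustering: start with $n$ singleton clusters, and repeatedly merge the two clusters realizing the smallest inter-cluster distance. Whenever two clusters are merged, assign to that internal node of the tree a ``height'' (half-distance) that is a blow-up of the current merging distance. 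The resulting leaf metric $\hat d$ is automatically an ultrametric. The two things to get right are (i) the distortion bound $d\le \hat d\le(\alpha+1)^{n-2}d$, and (ii) that every $\alpha$-Lipschitz $f$ becomes $1$-Lipschitz in $\hat d$.

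For (i), the lower bound $\hat d\ge d$ is built in by never shrinking below the original distance at each merge. For the upper bound: there are at most $n-1$ merges, and I would arrange that each merge inflates the relevant scale by a factor of at most $\alpha+1$ relative to the previous (smaller) merge scale, so that after $k$ merges the height is at most $(\alpha+1)^{k}$ times the smallest edge used; since the smallest merge distance between clusters containing $x$ and $y$ is at most $d(x,y)$ (there is a path in the original metric of minimax weight $\le d(x,y)$, namely the direct edge), and there are at most $n-2$ further inflations along the way to separating $x$ from $y$, we get $\hat d(x,y)\le(\alpha+1)^{n-2}d(x,y)$. This is the step requiring the most care: I must choose the height-assignment rule so that consecutive distinct scales in the HST differ by a factor $\ge\alpha+1$ (making it a genuine $(\alpha+1)$-HST) while still respecting $\hat d\ge d$ and not overshooting the cumulative $(\alpha+1)^{n-2}$ bound — essentially, round the $i$-th distinct merge distance up to the smallest power-like value that is $\ge\max\{$previous height$\cdot(\alpha+1)$, current merge distance$\}$, and argue inductively that height at stage $i$ is $\le(\alpha+1)^{i-1}\cdot(\text{that merge distance})$.

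For (ii), the point is the separation property of HSTs combined with $\alpha$-Lipschitzness. Take $x,y\in A$ with $\hat d(x,y)=h$, the height of their least common ancestor $u$ in the tree. All points under $u$ lie within $\hat d$-distance $h$, and all points {\em outside} $u$'s subtree lie at $\hat d$-distance $(\alpha+1)h$ or more from everything inside — in particular $\hat d$-distance $\ge(\alpha+1)h > h\ge \hat d(x,y)$. Now $d(f(x),f(y))\le\alpha\, d(x,y)\le\alpha\,\hat d(x,y)=\alpha h$, and I claim $\hat d(f(x),f(y))\le\alpha h<(\alpha+1)h$, which forces $f(x)$ and $f(y)$ to lie in the same subtree at height $\le h$ — hence $\hat d(f(x),f(y))\le h=\hat d(x,y)$, giving $1$-Lipschitzness. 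The claim $\hat d(f(x),f(y))\le\alpha h$ needs the matching upper estimate on $\hat d$ in terms of $d$ {\em at the relevant scale}: from the construction, if $d(p,q)\le\alpha h$ then $p,q$ get merged at a stage whose height is at most $(\alpha+1)$ times that merge distance — but this does not immediately give $\le\alpha h$. So in fact I expect to need a sharper, scale-local version: the key inequality to nail down is that $\hat d(p,q)< (\alpha+1)h$ whenever $d(p,q)\le \alpha h$ and $h$ is one of the HST heights, which should follow because the merge joining the clusters of $p$ and $q$ happens at a distance $\le\alpha h<(\alpha+1)h$, and $(\alpha+1)h$ is itself (at least when $\alpha\ge 2$, so that scales are well-separated) an available HST height — so the merge height is $<(\alpha+1)h$ as well. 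This interplay between ``distances $\le\alpha h$ stay strictly below the next HST scale $(\alpha+1)h$'' is where the hypothesis $\alpha\ge 2$ is used and is the main obstacle; everything else is bookkeeping on the agglomerative construction.
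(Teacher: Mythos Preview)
Your approach has a genuine gap in part (ii), and in fact the construction as you describe it does not yield the $1$-Lipschitz property. The issue is this: you arrange that consecutive \emph{HST heights} differ by a factor $\ge\alpha+1$, and then want to conclude that whenever $d(p,q)\le\alpha h$ for an HST height $h$, the merge joining $p$ and $q$ receives height $<(\alpha+1)h$, hence $\le h$. But your height rule $h_i=\max\{(\alpha+1)h_{i-1},\text{merge distance}\}$ does not give this. Concretely, take $M=\{0,1,\alpha+1\}\subset\R$. Single-linkage first merges $\{0\},\{1\}$ at distance $1$ (so $h_1=1$), then merges $\{0,1\},\{\alpha+1\}$ at distance $\alpha$, and your rule sets $h_2=\max\{(\alpha+1)\cdot 1,\alpha\}=\alpha+1$. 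Now the map $f\colon 0\mapsto 1,\ 1\mapsto \alpha+1$ is exactly $\alpha$-Lipschitz in $d$, yet $\hat d(f(0),f(1))=h_2=\alpha+1>1=\hat d(0,1)$, so $f$ is not $1$-Lipschitz in $\hat d$. You yourself flag this step as ``the main obstacle,'' and it is: the merge distance being $\le\alpha h$ does not control the assigned height, because the $(\alpha+1)h_{i-1}$ term can (and here does) push the height up to exactly $(\alpha+1)h$.

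The paper's construction avoids this by working the other way around: rather than forcing the HST \emph{heights} to be $(\alpha+1)$-separated, it chooses the level boundaries $L_1<L_2<\dots$ so that no \emph{original} $d$-distance lies in any interval $(L_k,\alpha L_k]$. Concretely, one greedily grows level $k$ by repeatedly absorbing every edge within a factor $\alpha$ of the current longest level-$k$ edge, and sets $L_k$ to that longest length. This gap property in $d$ is what makes the Lipschitz argument work cleanly: if $\hat d(x,y)=L_k$ then $d(x,y)\le L_k$, so $d(f(x),f(y))\le\alpha L_k$, and since no $d$-distance lies in $(L_k,\alpha L_k]$ we get $d(f(x),f(y))\le L_k$, hence $\hat d(f(x),f(y))\le L_k$. (In the three-point example above, the paper's construction puts all three edges into a single level with $L_1=\alpha+1$, giving the uniform metric.) The condition $\alpha\ge 2$ is used differently too: it guarantees that being connected by edges of level $\le k$ is transitive, so that setting all level-$k$ edges to $L_k$ really defines an ultrametric. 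Finally, for the distortion bound the paper tracks the $n-1$ ``critical'' edges (those that reduce the number of components when added in increasing order) and shows $D_s\le\alpha\sum_{r<s}D_r$, which inductively gives $D_s\le\alpha(\alpha+1)^{s-2}D_1$ and hence $L_1\le\sum_s D_s\le(\alpha+1)^{n-2}D_1$; your sketch of (i) gestures at this but does not isolate the spanning-tree structure that makes the telescoping work.
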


\begin{proof}

The idea is to group the edges (i.e., pairs of distinct elements) in $M$ into levels according to their distance. In level $1$ we start with the minimum distance edges, and repeatedly add all edges which are within a factor $\alpha$ of the largest level $1$ edge until no more are possible. We then continue, constructing level $k$ by starting with the shortest edge not in a previous level, and then adding all edges within a factor $\alpha$ of some edge already in level $k$. Let $L_k$ be the longest distance of an edge in level $k$. We define $\hat d$ by setting level $k$ edges to have $\hat d$-length $L_k$. 

Note that since $\alpha\geq 2$, being connected by edges of level at most $k$ is an equivalence relation for any $k$. Therefore, defining distances to be $L_k$ in level $k$ is a valid ultrametric. Since no edge has length in any interval $(L_k,\alpha L_k]$, any transformation that is $\alpha$-Lipschitz with respect to $d$ is $1$-Lipschitz with respect to $\hat d$. 

It remains to argue that distances were increased by at most a factor $(\alpha+1)^{n-2}$ in going from $d$ to $\hat d$. We show it for the edges in level $k$. Actually we may assume without loss of generality that $k=1$, because if not we may take all edges in levels up to $k-1$ and set their distances to the shortest edge distance in level $k$. This is still a valid metric and the case of level $1$ in this new metric implies the case of level $k$ in the old metric. For the main argument when $k=1$, consider adding the edges in order starting from an empty graph on the $n$ points of $M$, giving an increasing seqence $H_0,H_1,\dots$ of graphs. Exactly $n-1$ of these edges decrease the number of connected components by $1$ at the time they are added. We call these edges critical and denote by $D_1\le \dots\le D_{n-1}$ their lengths. At any time, the critical edges in $H_t$ form a spanning forest for $H_t$. Therefore the maximum distance of any edge in $H_t$ is at most the sum of the lengths of the critical edges in $H_t$. Therefore the next critical edge to be added has length

\[D_s\leq \alpha\cdot\sum_{r=1}^{s-1}D_r.\]

From this it is easy to see inductively that $D_s\leq \alpha(\alpha+1)^{s-2} D_1$ for $s\geq 2$. This means the maximum distance of any edge in $H$ is at most $\sum_{s}D_s\leq (\alpha+1)^{n-2}D_1$ (since there are $n-1$ critical edges). We remark that equality is achieved by the set of points $\{0,1,\alpha+1,(\alpha+1)^2,\dots,(\alpha+1)^{n-2}\}\subset\mathbb R$.
\end{proof}

\subsection{\texorpdfstring{\wfa for $1$-Lipschitz transformations on ultrahomogeneous ultrametrics}{\wfa for 1-Lipschitz transformations on ultrahomogeneous ultrametrics}}

We now prove that \wfa is $(n-1)$-competitive for $1$-Lipschitz transformations in ultrahomogeneous ultrametrics, thereby also proving the first statement of \pref{thm:workf}. Note that this bound is optimal, since $n-1$ is also the exact competitive ratio of ordinary MSS on any $n$-point metric space.

\begin{lemma}\label{lem:WfaUltra}
	Let $(M,d)$ be an ultrahomogeneous ultrametric with $n$ points. The \wfa for T-MSS on $(M,d)$ with $1$-Lipschitz transformation is $(n-1)$-competitive.
\end{lemma}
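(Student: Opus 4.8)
The plan is to apply Lemma~\ref{lem:extCost}. Since a work function $w$ on an $n$-point ultrametric is determined by its values on the metric space, the goal is to build a potential $\Phi\colon\cW\to\R_+$ satisfying the three inequalities \eqref{eq:PotBoundsExtCost}--\eqref{eq:PotNumberWfs} with $\rho_M = n$ and some constant $C_M$, which then yields $(n-1)$-competitiveness. The natural candidate, mirroring classical analyses of \wfa for MSS on ultrametrics, is a ``sum over the support'' potential: order the points $p_1,\dots,p_m$ of the support of $w$ by increasing work function value and set $\Phi(w)=\sum_{i} c_i\, w(p_i)$ for carefully chosen weights $c_i$ (or, more robustly, something like $\Phi(w) = \sum_{\text{levels }\ell}\big(\text{contribution of the cheapest support point in each subtree at level }\ell\big)$, exploiting the tree structure of the ultrametric). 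The key structural fact to use is that in an ultrametric the support of any work function, together with the nested cluster partition, behaves very rigidly: if a cluster $C$ at some level contains a support point, then within $C$ the work function is essentially flat up to the sub-cluster scale. This is what keeps the number of ``active'' support points, and hence the potential, controlled by $n$.

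The steps I would carry out, in order, are: (1)~Record the basic facts about work functions on ultrametrics — the support lies in $B_t$ (resp.\ $A_t$ for $w_t^-$), and passing from $w_{t-1}$ to $w_t^-$ only restricts the support to $A_t$ and raises values, while passing from $w_t^-$ to $w_t$ is a ``pushforward'' along the transformation $f_t$. (2)~Here is where ultrahomogeneity enters: since $M$ is ultrahomogeneous, every $1$-Lipschitz $f_t\colon A_t\to B_t$ on an ultrametric extends to a $1$-Lipschitz self-map of $M$ — actually one should check that a $1$-Lipschitz partial map on an ultrametric can be realized, on the relevant clusters, as an isometry followed by a collapse, which ultrahomogeneity lets us extend globally. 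This reduces the update $w_t^- \mapsto w_t$ to: rename points by a global automorphism (harmless for $\Phi$ if $\Phi$ is automorphism-invariant, which the tree-symmetric definition above is) and then possibly merge some clusters, which only decreases min values within a cluster — to be shown compatible with \eqref{eq:PotIncr}. (3)~Define $\Phi$ explicitly using the cluster structure and verify \eqref{eq:PotNumberWfs}: $\Phi(w)\le n\cdot\min_p w(p) + C_M$ should follow because each of the at most $n-1$ internal tree nodes contributes at most once, times a diameter-scale constant. (4)~Verify \eqref{eq:PotBoundsExtCost}, the single-request bound $w_t^-(x)-w_{t-1}(x)\le \Phi(w_t^-)-\Phi(w_{t-1})$, for an arbitrary fixed $x$; this is the standard ``the cost of extending the request is paid by the potential increase'' computation, using that $w_t^-$ is the request $A_t$ applied to $w_{t-1}$.

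The main obstacle I expect is step~(2) together with getting \eqref{eq:PotIncr} and \eqref{eq:PotBoundsExtCost} to hold \emph{simultaneously} under the transformation step. In ordinary MSS the transformation is the identity, so $w_t = w_t^-$ and \eqref{eq:PotIncr} is trivial; here the whole difficulty is that a genuine $1$-Lipschitz transformation can collapse several clusters onto one, which could in principle lower some support values and thereby threaten monotonicity of $\Phi$. The resolution should be that, because $f_t$ is $1$-Lipschitz on an ultrametric, the image of a cluster sits inside a cluster of no larger diameter, so collapsing is ``downward'' in the tree — and a potential built from per-node contributions that are monotone in the (downward-closed) minimum work-function value on each subtree will not decrease. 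Making this precise — i.e.\ choosing the per-node contribution function so that it is robust both to the request step (values go up, support shrinks) and to the collapse step (clusters merge, subtree minima can only move to deeper levels) — is the crux. Once the potential is pinned down, the remaining verifications are routine ultrametric bookkeeping, and the constant $C_M$ can be taken proportional to the diameter of $M$, which does not affect the competitive ratio.
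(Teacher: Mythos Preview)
Your high-level plan (use Lemma~\ref{lem:extCost} with $\rho_M=n$) is right, but the execution diverges from the paper in two places, and one of them is a genuine gap.

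First, the potential: you are reaching for an elaborate, tree-indexed object, but the paper simply takes
\[
\Phi(w)=\sum_{p\in M} w(p),
\]
the same potential that gives $n-1$ for ordinary MSS. With this choice, \eqref{eq:PotBoundsExtCost} is immediate (since $w_t^-\ge w_{t-1}$ pointwise) and \eqref{eq:PotNumberWfs} with $\rho_M=n$ follows from $1$-Lipschitzness of $w$. All of the work is in \eqref{eq:PotIncr}, exactly as you anticipated, but there is no need to tune weights to balance \eqref{eq:PotBoundsExtCost} against \eqref{eq:PotIncr}.

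Second, and this is the gap: your step~(2) proposes to use ultrahomogeneity to extend the $1$-Lipschitz map $f_t$ to a global $1$-Lipschitz self-map of $M$, and then decompose as ``automorphism followed by collapse''. Ultrahomogeneity, however, only guarantees extensions of partial \emph{isometries}; it says nothing about extending genuine contractions, and the claimed decomposition is neither stated precisely nor justified. The paper avoids this entirely. It never extends $f_t$. Instead, it proves \eqref{eq:PotIncr} by a sublevel-set comparison: writing $T_r(p)$ for the closed $r$-ball around $p$, the ultrametric inequality plus $1$-Lipschitzness of $f_t$ give that disjointness of balls $T_r(f_t(a))$ and $T_{r'}(f_t(a'))$ implies disjointness of $T_r(a)$ and $T_{r'}(a')$. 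Ultrahomogeneity enters \emph{only} through the consequence that balls of the same radius have the same cardinality. Combining these two facts, one shows $|w_t^{-1}([0,y])|\le |(w_t^-)^{-1}([0,y])|$ for every $y$, and hence $\sum_p w_t^-(p)\le \sum_p w_t(p)$.

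So the missing idea is: do not try to globalize $f_t$; compare sublevel-set sizes directly, using ultrahomogeneity only for the ``equal-radius balls have equal size'' fact.
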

\begin{proof}
	We use the same potential function that also yields $(n-1)$-competitiveness for ordinary MSS on general $n$-point metrics and $(2n-1)$-competitiveness for MTS \cite{BorodinE98}:
	\begin{align*}
		\Phi(w):=\sum_{p\in M} w(p).
	\end{align*}
	The bound \eqref{eq:PotBoundsExtCost} follows from the fact that $w_t^-(p)- w_{t-1}(p)\ge 0$ for all $p$. Bound \eqref{eq:PotNumberWfs} for $\rho_M=n$ follows from the $1$-Lipschitzness of $w$, choosing $C_M$ to be $n-1$ times the diameter of $M$.
	
	For \eqref{eq:PotIncr}, we need to show that the sum of work function values is non-decreasing when a transformation $f_t\colon A_t\to B_t$ is applied. On a high level, the idea is as follows. The work function $w_t^-$ before transformation is supported on $A_t$, and the work function $w_t$ afterwards on $B_t$. Imagine for simplicity that the work function value of all support points were $0$. Then the work function value at other points is simply the distance from the support. Since $f_t$ is $1$-Lipschitz, we can think of $B_t$ as a contracted version of the set $A_t$ (and since the metric space is ultrahomogeneous, we can ignore the fact that $B_t$ might be located in a very different part of the metric space than $A_t$). This shrinking of the support means that most other points of the metric space tend to get further away from the support, and thus their work function values tend to increase.
	
	We now turn to a formal proof of inequality \eqref{eq:PotIncr}. Since $M$ is an ultrametric, we can view it as the set of leaves of an ultrametric tree.
	
	Denote by $T_r(p):=\{x\in M\colon d(p,x)\le r\}$ the ball of radius $r$ around $p$. Note that this is the set of leaves of the subtree rooted at the highest ancestor of $p$ whose weight is at most $r$. In particular, the sets $T_r(p)$ form a laminar family. We claim for all $a,a'\in A_t$ and $r,r'\ge 0$ that
	\begin{align}
		T_r(f_t(a))\cap T_{r'}(f_t(a'))	=\emptyset \implies T_r(a)\cap T_{r'}(a')=\emptyset.\label{eq:preserveContainment}
	\end{align}
	To see this, suppose $x\in T_r(a)\cap T_{r'}(a')$ and say without loss of generality that $r\le r'$. Then, by the ultrametric inequality, $d(a,a')\le \max\{d(a,x),d(a',x)\}\le r'$. Since $f_t$ is $1$-Lipschitz, this also means that $d(f_t(a),f_t(a'))\le r'$. But then $f_t(a)\in T_r(f_t(a))\cap T_{r'}(f_t(a'))$.
	
	For $y\ge 0$ and a work function $w$ supported on a set $S\in M$, its $y$-sublevel set is given by
	\begin{align*}
	w^{-1}([0,y])=\bigcup_{s\in S}T_{y-w(s)}(s).
	\end{align*}
	Recall that $w_{t}^-$ is supported on $A_t$, and note that $w_t$ is supported on a set $\tilde B_t\subseteq B_t$ such that
	\begin{align*}
	\forall b\in \tilde B_t \exists a_b\in A_t\colon f_t(a_b)=b\text{ and }w_t(b)=w_t^-(a_b).
	\end{align*}
	Since the sets $T_r(p)$ form a laminar family, we can choose for each $y\ge 0$ a subset $\tilde B_t^y\subseteq \tilde B_t$ such that
	\begin{align*}
	w_t^{-1}([0,y])=\bigdotcup_{b\in \tilde B_t^y}T_{y-w_t(b)}(b)
	\end{align*}
	is a disjoint union. Due to implication \eqref{eq:preserveContainment}, this also means that
	\begin{align*}
	\bigdotcup_{b\in \tilde B_t^y}T_{y-w_t^-(a_b)}(a_b)
	\end{align*}
	is a disjoint union.
	
	Now, the cardinality of the $y$-sublevel set of $w_t$ is bounded by
	\begin{align}
	|w_t^{-1}([0,y])| &= \sum_{b\in \tilde B_t^y}|T_{y-w_t(b)}(b)|\nonumber\\
	&= \sum_{b\in \tilde B_t^y}|T_{y-w_t^-(a_b)}(a_b)|\label{eq:cardSub2}\\
	&= \left|\bigdotcup_{b\in \tilde B_t^y}T_{y-w_t^-(a_b)}(a_b)\right|\label{eq:cardSub3}\\
	&\le \left|\bigcup_{a\in A_t}T_{y-w_t^-(a)}(a)\right|\nonumber\\
	&=|(w_t^-)^{-1}([0,y])|,\nonumber
	\end{align}
	where equation \eqref{eq:cardSub2} uses the fact that since $M$ is ultrahomogeneous, balls of the same radius have the same cardinality. Thus, for each $y\ge 0$ there are at least as many points whose $w_t^-$-value is at most $y$ as there are points whose $w_t$-value is at most $y$. Therefore, if $p_1,p_2,\dots,p_n$ and $p_1^-,p_2^-,\dots,p_n^-$ are two enumerations of $M$ by increasing $w_t$- and $w_t^-$-values, respectively, then $w_t^-(p_i^-))\le w_t(p_i)$. Hence, inequality \eqref{eq:PotIncr} follows.
\end{proof}

\paragraph{Remark.}
One may wonder whether the guarantee of \pref{lem:WfaUltra} is achieved more generally for ultrahomogeneous \emph{metric} (rather than ultrametric) spaces. The answer is negative: Consider the $8$-point metric space $M=\{0,2\}\times\{0,3\}\times\{0,4\}$ with the $\ell_1$-norm. Since $M$ is isometric to the cube $\{0,1\}^3$ with a weighted $\ell_1$-norm, and all partial isometries on $M$ are translations, \pref{thm:bbounds} (proved in \pref{sec:weightedL1}) implies that $M$ is ultrahomogeneous. Consider the work function $w$ supported $\{(0,0,0),(2,3,0)\}$, where it takes value $0$. We have $\Phi(w)=2+2+4+4+6+6=24$. The updated work function $w'$ after contracting $\{(0,0,0),(2,3,0)\}$ to $\{(0,0,0),(0,0,4)\}$ has $\Phi(w')=2\cdot(2+3+5)=20<\Phi(w)$, meaning that inequality \eqref{eq:PotIncr} is violated. Crucially, property \eqref{eq:preserveContainment} that disjoint balls remain disjoint when their centers are moved apart is violated on $M$. Using this observation, it is not hard to construct a request sequence on $M$ where \eqref{eq:PotBoundsExtCost} is always tight for $x=b_{t-1}$ and \eqref{eq:PotIncr} is either tight or violated for each request, and violated for a constant fraction of the requests. For such a request sequence, the analysis in the proof of \pref{lem:extCost} yields a \emph{lower} bound strictly larger than $n-1$ on the competitive ratio of \wfa for T-MSS on $M$ with $1$-Lipschitz transformations.

\subsection{Putting it together}
Given any $n$-point metric $(M,d)$, we obtain a $2\max\{2\alpha+2,6\}^{n-2}$-competitive deterministic algorithm for T-MSS with $\alpha$-Lipschitz transformations as follows: Making a multiplicative error of $\max\{\alpha+1,3\}^{n-2}$, Lemma \ref{lem:distort} allows us to assume that $M$ is an ultrametric and transformations are $1$-Lipschitz. By Theorem~\ref{thm:bbounds}, it further admits a $2^{n-1}$-point weakly ultrahomogeneous extension $(\hat M,\hat d)$, and the proof of this statement in Theorem~\ref{thm:bbounds} actually shows that $(\hat M,\hat d)$ is still an ultrametric and it is ultrahomogeneous (not just weakly). Therefore, by Lemma~\ref{lem:WfaUltra}, the \wfa is $2^{n-1}$-competitive on $(\hat M,\hat d)$. Overall, this gives a competitive ratio of $\max\{\alpha+1,3\}^{n-2}2^{n-1}=2\max\{2\alpha+2,6\}^{n-2}$.

\subsection{Lower bound for Lipschitz transformations}

In this section we prove the lower bound part of Theorem \ref{thm:cbounds} showing that any randomized algorithm for T-MSS with $\alpha$-Lipschitz transformations has competitive ratio at least $\min\{\alpha+1,\alpha^2\}^{n-2}$.

	Let $m:=\min\{\alpha+1,\alpha^2\}$. Assume $\alpha\ge 1$ since otherwise there is nothing to show. Consider the graph with vertices $p_1,\dots,p_n$ and edges from $p_1$ to every vertex and between consecutive vertices of lengths
	\begin{align*}
	d(p_1,p_i)&:=m^{i-2}&&i=2,\dots,n\\
	d(p_i,p_{i+1})&:=\alpha m^{i-2}&&i=2,\dots,n-1.
	\end{align*}
	Note that these edge lengths satisfy the triangle inequality: When adding the vertices to the graph in order, the addition of $p_i$ only creates the new triangle $(p_1,p_{i-1},p_i)$ with edge lengths $(m^{i-3},\alpha m^{i-3}, m^{i-2})$. Since $1\le \alpha\le m\le \alpha+1$, it satisfies the triangle inequality. Therefore, the shortest path extension of $d$ defines a valid metric.
	
	Consider a T-MSS instance on this space with $p_1$ as its initial state. For $t=1,2,\dots,n-2$, we issue transformations
	\begin{align*}
	f_{2t-1}\colon&\{p_1,p_{t+1}\}\to\{p_{t+1},p_{t+2}\}\\
	f_{2t}\colon&\{p_{t+1},p_{t+2}\}\to\{p_1,p_{t+2}\},
	\end{align*}
	where each transformation maps the first (resp. second) point of the domain to the first (resp. second) point of the codomain. Note that these maps are $\alpha$-Lipschitz, using for $f_{2t}$ that $m\le \alpha^2$. Then, we issue the final transformation
	\begin{align*}
	f_{2n-3}\colon \{x\}\to\{p_1\}
	\end{align*}
	with $x$ chosen uniformly at random from $\{p_1,p_n\}$.
	
	With probability $1/2$, the online algorithm has to pay $m^{n-2}$ to move to $x$ when $f_{2n-3}$ is issued. An offline algorithm can serve the sequence with expected cost $1/2$: Before $f_1$, it either stays at $p_1$ for cost $0$ (if $x=p_1$) or moves from $p_1$ to $p_2$ for cost $1$ (if $x=p_n$), which allows to serve the rest of the request sequence for free. Since the request sequence can be repeated arbitrarily often (with a new random $x$), we conclude a lower bound of $m^{n-2}$ on the competitive ratio.

\section{Algorithms and Lower Bounds for Special Cases of T-MSS}\label{sec:special}

In this section we show upper and lower bounds on the competitive ratio for T-MSS for special cases.
In Section \ref{sec:convebody} we show that there exists no online algorithm with finite competitive ratio for nested convex body chasing with contractions in the plane, even with randomization.
In Section \ref{sec:taxi} we design a new randomized algorithm for the $k$-taxi problem. In Section \ref{sec:swap} we show upper and lower bounds for swap transformations. Finally, in Section \ref{sec:superl} we show a superlinear lower bound on the competitiveness of \wfa for isometry transformations.

\subsection{Contracting convex bodies are unchaseable}\label{sec:convebody}

Here we show that nested convex body chasing with contractions has no competitive algorithm. In nested convex body chasing, the requests form a nested sequence $K_0\supseteq K_1\supseteq\dots $ of convex sets in $\mathbb R^d$. The player starts at $x_0\in K_0$ and moves online to a point $x_t\in K_t$, paying movement cost $\sum_{t\geq 1} ||x_{t-1}-x_t||.$ This problem is a special case of the more general convex body chasing problem which allows an arbitrary non-nested sequence $K_0,K_1,\dots$ of convex sets. This problem has received a lot of recent study without transformations and admits a $d$-competitive algorithm -- see \cite{cbc1,cbc2,cbc3,cbc4,cbc5,cbc6}. Because $\mathbb R^d$ is ultrahomogenous, it follows that the $d$-competitive algorithm continues to apply with partial isometry transformations.

Here we show that with contraction transformations $\mathbb R^d\to\mathbb R^d$ there is no competitive algorithm, even in the nested case for $d=2$ and with randomization. This gives a non-trivial example in which contractions are provably harder than isometries. In fact all the contractions we use are projections from $K_t$ to $K_{t+1}$. Our proof goes by reduction to a family of $1$-dimensional MTS problems known to have arbitrarily large competitive ratio. A related reduction appeared in \cite{kservercbc} to show that $2$-server convex body chasing is impossible in $2$ dimensions.

\cbc*

\begin{proof}

Fix a large integer $n$, a much larger integer $M=M(n)$ and a much larger $N=N(n)$. We start with $K_0=[0,1]\times [0,N]$ with starting point $x_0=(0,0)$. The projected convex sets rotate modulo $3$: At multiples of $3$ we simply have $K_{3t}=[0,1]\times [t,N].$ Now, for a sequence $(a_1,\dots,a_{N})$ of positive integers $a_i\in \{1,2,\dots,n\}$ define the points $p_1^t=(\frac{a_t}{n},t)$, $p_2^t=(\frac{a_t-1}{n},t+\frac{1}{Mn})$ and $p_3^t=(\frac{a_t+1}{n},t+\frac{1}{Mn}).$ We define $K_{3t+1}$ by cutting from $K_{3t}$ along the lines $p_1^tp_2^t$ and $p_1^tp_3^t$. We define $K_{3t+2}$ by also cutting along the line $p_2^tp_3^t$. See Figure~\ref{fig:CBC}.

We take $a_t$ to be an adversarially chosen sequence of such integers, yielding an adversarial nested chasing instance. For $s$ congruent to $0$ or $2$ modulo $3$, we apply projection maps (which are contractions) from $K_s$ onto $K_{s+1}$. Hence movement cost is incurred only on transitions from $K_{3t+1}\to K_{3t+2}$.

The idea is that the resulting problem is approximately a $1$-dimensional MTS, as without loss of generality we may assume $x_{3t}$ is always on the upper boundary of $K_{3t}$. It is easy to see that, crucially, the horizontal movement induced by the projections is $O(M^{-2})$ per time-step - we will treat this as an additive error term. The vertical movement corresponds to a metrical task system with cost functions given by $\frac{1}{M}c_{a_t}(x)$ for

\[c_{a_t}(x)=\begin{cases} 0 &\mbox{if } x \leq \frac{a_t-1}{n}\\
\min\{|x-\frac{a_t-1}{n}|,|x-\frac{a_t+1}{n}|\} &\mbox{if } x\in [\frac{a_t-1}{n},\frac{a_t+1}{n}]\\
0 & \mbox{if } x\geq \frac{a_t+1}{n}.\end{cases}\]

First, by repeating values of $a_t$ in blocks of $M$, we can obtain an MTS which directly uses cost functions $c_{a_t}(x)$, now with an additive error $O(M^{-1})$ per (block) time step for the horizontal movement. Next we claim any randomized algorithm for this new MTS can be assumed to stay on the finite set of values $\frac{k}{n}$ for $k\in \{0,1,\dots,n\}$. Indeed, if a randomized algorithm is at position $\frac{k+\alpha}{n}$ for $\alpha\in (0,1)$ we can instead move to $\frac{k}{n}$ with probability $1-\alpha$ and $\frac{k+1}{n}$ with probability $\alpha$. Moreover we can couple these roundings together by first sampling $u\in [0,1]$ uniformly and rounding at all times based on whether $u\leq \alpha$ or not. This turns any algorithm into a randomized algorithm which stays on the values $\frac{k}{n}$ and has the same expected movement cost. Because the cost functions $c_{a_t}$ are affine on each interval $(\frac{k}{n},\frac{k+1}{n})$ it results in the same expected service cost as well.

If the player is restricted to stay on the $n+1$ values $\frac{k}{n}$, the movement cost functions $c_{a_t}$ are $0$ at all but one of these values. Hence by repeating requests $O_n(1)$ times to force movement we may reinterpret this as an $n$-server problem on a metric space with $n+1$ points, by taking the player's location in the original problem to be the unique spot with no server. It is well-known \cite{BartalBM06,BartalLMN05} that the randomized competitive ratio of any $n$-server problem in a metric space with at least $n+1$ points is $\Omega(\log n/\log\log n)$. Finally, it is easy to see that for any MTS on a finite state space with competitive ratio $C$, an additive error in cost of $o(1)$ per time-step affects the competitive ratio by $o(1)$. Therefore for any fixed $n$, taking $M\to\infty$ results eventually in an MTS with competitive ratio $\Omega(\log n/\log\log n)$ even taking the horizontal effects of the projection maps into account. Finally taking $N$ sufficiently large to realize this competitive ratio gives the desired lower bound.
\end{proof}

\begin{figure}[h]
\centering
\includegraphics[width=8cm]{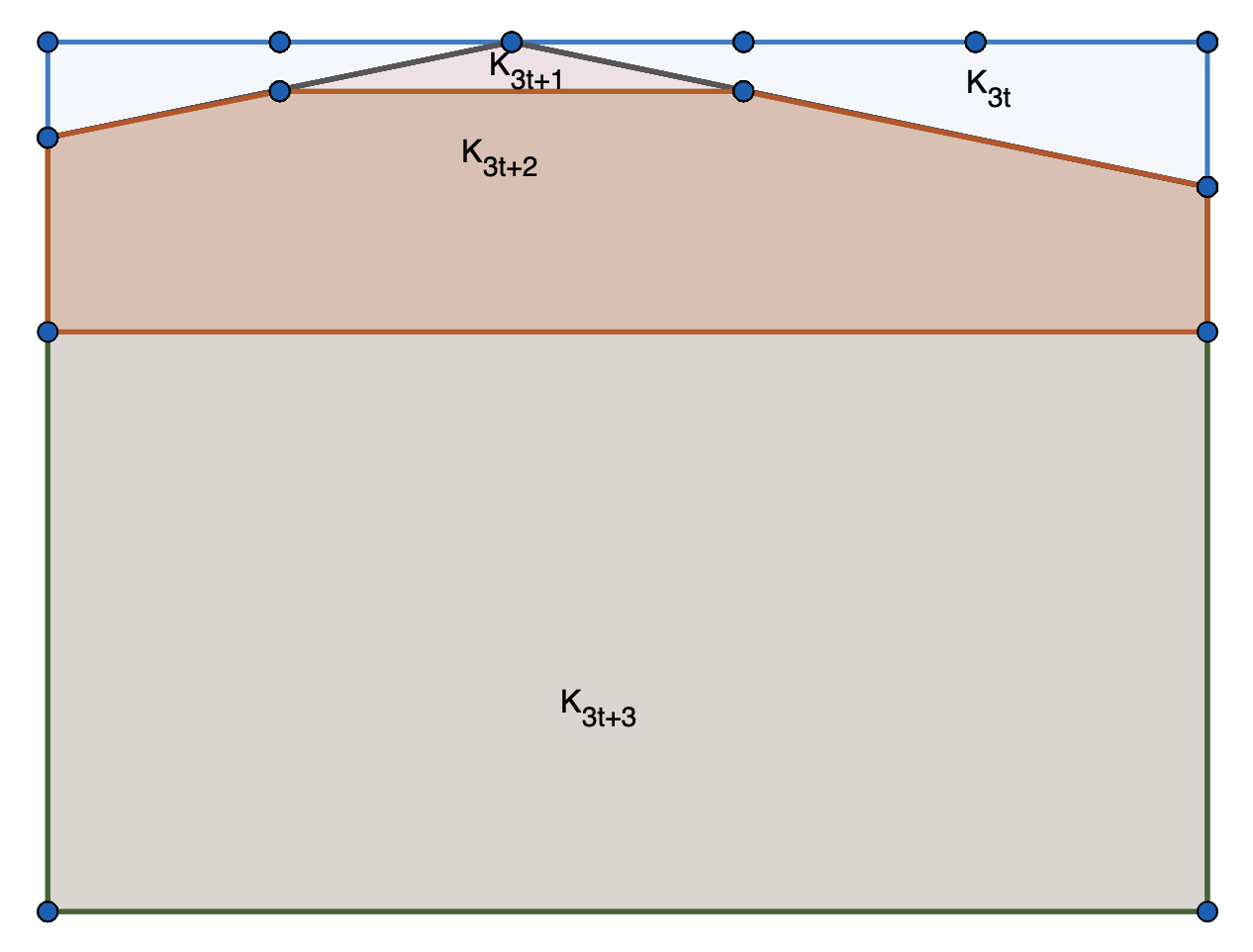}
\caption{To show that chasing nested convex bodies is impossible with contractions, we construct shrinking sets as shown. Euclidean nearest-point projections are taken onto the sets $K_{3t+1},K_{3t+3}$, so that movement cost is incurred only in moving from $K_{3t+1}$ to $K_{3t+2}$. Up to the negligible horizontal movements from projection, this results in a $1$-dimensional metrical task system with unbounded competitive ratio.}
\label{fig:CBC}
\end{figure} 
\subsection{\texorpdfstring{A poly$(n,\log k)$-competitive $k$-taxi algorithm}{A poly(n,log k)-competitive algorithm for k-taxi}}\label{sec:taxi}

In the $k$-taxi problem, there are $k$ taxis located in a metric space $(M,d)$. A sequence of requests arrives, where each request is a pair of points $(s_t,d_t)\in M\times M$, representing the start and destination of a passenger request. Each request must be served upon its arrival by sending a taxi to $s_t$, from where it is relocated to $d_t$. The cost is defined as the distance travelled by taxis \emph{while not carrying a passenger}, i.e., excluding the distances from $s_t$ to $d_t$.

Note that the $k$-taxi problem is a special case of T-MSS: As the metric space for T-MSS, we take the set of taxi configurations (i.e., $k$-point multisets of points in $M$), and the distance between two configurations is the minimum cost of moving from one configuration to the other. A taxi request $(s_t,d_t)$ translates to the transformation that maps configurations containing $s_t$ to the corresponding configurations with a taxi at $s_t$ replaced by a taxi at $d_t$. Observe that this transformation is an isometry in the configuration space.

In principle, the size of a weakly ultrahomogeneous extension of the configuration space would yield a bound on the competitive ratio of the $k$-taxi problem. However, we do not know a bound on this blow-up in general. We overcome this obstacle as follows: First, we apply a well-known embedding of the original $k$-taxi metric space into a tree (HST) metric. Then, we consider the configuration space of this tree metric, whose metric is given by a weighted $\ell_1$-norm. Moreover, the isometries corresponding to the $k$-taxi requests in this tree metric are actually translations, and for this case, \pref{thm:bbounds} yields a bound on the blow-up (proved later in \pref{sec:weightedL1}). The resulting algorithm has competitive ratio $O((n\log k)^2\log n)$, improving upon the previous bound of $O(2^k\log n)$ \cite{CoesterK19} whenever $n$ is sub-exponential in $k$.

\ktaxi*
\begin{proof}
	By well-known techniques \cite{Bar96,Fakcharoenphol04}, any $n$-point metric space can be embedded with distortion $O(\log n)$ into the set of leaves of a random (weighted) tree. It therefore suffices to describe an $O((n\log k)^2)$-competitive algorithm for the $k$-taxi problem on the set $\mathcal L$ of leaves of a tree with $|\mathcal L|=n$. Notice that there is a tree $\mathcal T$ with only $O(n)$ vertices that induces the metric on $\mathcal L$.
	
	Let $V$ be the set of vertices of $\mathcal T$ excluding the root. For $v\in V$, let $w_v$ be the length of the edge from $v$ to its parent. If we denote by $x_v$ the number of taxis in the subtree rooted at $v$, then a configuration of $k$ taxis can be denoted by a point in $M:=\{0,\dots,k\}^V$. Notice that only some points in $M$ correspond to valid $k$-taxi configurations. The cost of moving from configuration $x$ to configuration $y$ is given by the metric
	\begin{align*}
		d(x,y):=\sum_{v\in V} w_v|x_v-y_v|.
	\end{align*}
	Thus, the $k$-taxi problem on $\mathcal T$ is a special case of T-MSS on $(M,d)$. A $k$-taxi request $(s_t,d_t)$ corresponds to a translation by the vector that has $1$-entries in coordinates of ancesters of $d_t$ that are not ancestors of $s_t$, $-1$-entries in coordinates of ancestors $s_t$ that are not ancestors of $d_t$, and $0$ in the remaining coordinates. We therefore need to extend $M$ to a space $\hat M$ where these translations extend to automorphisms. As stated in \pref{thm:bbounds} and proved later in \pref{sec:weightedL1}, such an extension $\hat M$ of size $(2k)^{|V|}$ exists. Thus, by running an $O(\log^2|\hat M|)$-competitive algorithm for $MSS$ on $\hat M$ and treating each automorphism as a renaming of the points of $\hat M$, we obtain an algorithm for the $k$-taxi problem on $\mathcal T$ with competitive ratio $O(\log^2|\hat M|)=O((n\log k)^2)$, where we used that $|V|=O(n)$. Combined with the $O(\log n)$ loss due to the tree embedding, the theorem follows.
\end{proof} 
\subsection{Competitivity for swap transformations}\label{sec:swap}
In this section, we show tight bounds of $2n-3$ on the competitive ratio for T-MSS in general metrics when each transformation is either the identity on its domain (recall that we always allow identity transformations) or a swap, thereby proving the second part of \pref{thm:workf}. The upper bound is achieved by the \wfa.

\paragraph{Upper bound.} For brevity, we will denote the distance between two points $x,y\in M$ by $xy$ instead of $d(x,y)$. For $X\subseteq M$ and $x\in M$, we write $X-x:=X\setminus\{x\}$. For a work function $w$ and $x,y\in M$, let
	\begin{align*}
	\Psi_{x,y}(w)&:= \cl(M-x-y)+\sum_{p\in M-x-y}\min\{w(x)+yp,w(y)+xp\}\\
	\Phi(w)&:= \sum_{p\in M}w(p)+\min_{x,y\in M}\Psi_{x,y}(w)
	\end{align*}
	Here, $\cl(X):=\sum_{\{x,y\}\subseteq X} xy$ denotes the size of the clique of $X$, i.e., the sum of all distances between points in $X$.
	It suffices to show that $\Phi$ satisfies the properties of \pref{lem:extCost} with $\rho_M=2n-2$.
	
	Inequality \pref{eq:PotNumberWfs} is immediate from the fact that $\Phi(w)$ is a sum of $2n-2$ function values of $w$ and a bounded number of distances of $M$, and each function value of $w$ differs from $\min_x w(x)$ by at most the diameter of $M$ due to the $1$-Lipschitzness of $w$.
	
	Inequality \pref{eq:PotBoundsExtCost} follows from the fact that $\Phi(w)$ contains the summand $w(x)$, the remaining summands of $\Phi(w)$ are non-decreasing in $w$, and $w_{t-1}\le w_t^-$ pointwise.
	
	Inequality \pref{eq:PotIncr} is trivial if $f_t$ is the identity on its domain, so it remains to consider the case that $f_t\colon\{a,b\}\to\{a,b\}$ is a swap. Both $w_t^-$ and $w_t$ are supported on $\{a,b\}$. We first show that if $w$ is supported on $\{a,b\}$, then $\Psi_{x,y}(w)$ is minimized when $\{x,y\}=\{a,b\}$.
	
	Let $x$ and $y$ be such that $\Psi_{x,y}(w)$ is minimized and suppose $x\notin\{a,b\}$. Then we can assume without loss of generality (by symmetry) that $w(x)=w(a)+ax$. Then $\min\{w(x)+ya,w(y)+xa\}=w(y)+xa$. Thus,
	\begin{align*}
	\Psi_{x,y}(w)&=\cl(M-x-y)+w(y)+xa+\sum_{p\in M-x-y-a}\min\{w(a)+ax+yp,w(y)+xp\}\\
	&\ge\cl(M-x-y)+w(y)+xa\\
	&\qquad+\sum_{p\in M-x-y-a}\left(xp-ap+\min\{w(a)+yp,w(y)+ap\}\right)\\
	&\ge\cl(M-a-y)+\sum_{p\in M-y-a}\min\{w(a)+yp,w(y)+ap\}\\
	&= \Psi_{a,y}(w).
	\end{align*}
	Thus, $\Psi_{x,y}(w)$ is also minimized when $x=a$. If $y\ne b$ and $w(y)=w(b)+by$, then the symmetric argument shows that $\Psi_{x,y}(w)$ is minimized when $\{x,y\}=\{a,b\}$. Otherwise, if $y\ne b$, then $w(y)=w(a)+ay$ since $w$ is supported on $\{a,b\}$. Then
	\begin{align*}
	\Psi_{a,y}(w)&=\cl(M-a-y)+\sum_{p\in M-a-y}\min\{w(a)+yp,w(a)+ay+ap\}\\
	&= \cl(M-a)+(n-2)w(a)\\
	&\ge \cl(M-a-b)+\sum_{p\in M-a-b}\min\{w(a)+bp,w(b)+ap\}\\
	&= \Psi_{a,b}(w).
	\end{align*}
	Thus, it is indeed the case that $\min_{x,y}\Psi_{x,y}(w)=\Psi_{a,b}(w)$, for both $w=w_t^-$ and $w=w_t$. Hence,
	\begin{align*}
	\Phi(w_t^-)-\Phi(w_t)&=w_t^-(a)+w_t^-(b)-w_t(a)-w_t(b)\\
	&\qquad+\sum_{p\in M-a-b}\left(w_t^-(p)+\min\{w_t^-(a)+bp,w_t^-(b)+ap\}\right)\\
	&\qquad-\sum_{p\in M-a-b}\left(w_t(p)+\min\{w_t(a)+bp,w_t(b)+ap\}\right).
	\end{align*}
	Since $w_t(a)=w_t^-(b)$, $w_t(b)=w_t^-(a)$, and $w(p)=\min\{w(a)+ap,w(b)+bp\}$ for $w=w_t^-$ and $w=w_t$, everything cancels in the last sum and we obtain \pref{eq:PotIncr}.

\paragraph{Lower bound.} Consider the metric with points $1,2,\dots,n$ where the distance from $1$ to any other point is $1$ and the distance between any other two points is $2$. The initial location of the server is $1$. For $i=1,\dots,n-1$, the $i$th request is the identity with domain $A_i$, where $A_1:=\{2,3,\dots,n\}$ and for $i\ge 2$, $A_i$ is the subset of $A_{i-1}$ obtained by removing the location of the online algorithm's server before this request is issued. During these requests, the online algorithm suffers cost $2n-3$, but an offline algorithm could immediately go to the one point $p$ in $A_{n-1}$ for cost $1$. We issue one more request that swaps $p$ and $1$ so as to return to the initial configuration, allowing to repeat the procedure arbitrarily often.

\subsection{Superlinear lower bound for \wfa with isometries}\label{sec:superl}

In this section we show a superlinear lower bound of $\omega(n^{1.29})$ on the competitiveness of \wfa for T-MSS with isometry transformations, proving the third part of \pref{thm:workf}.

	It suffices to show the the statement for values of $n$ that are a power of $4$. Let $\alpha\in\N$ be some large constant. For $h\in\N_0$, we construct a $4^h$-point metric space $T_h$ by induction: The space $T_0$ is just a single point. For $h\ge 1$, space $T_h$ is a disjoint union of four copies of $T_{h-1}$, which we denote $T_{h-1}^0,T_{h-1}^1,T_{h-1}^2,T_{h-1}^3$. For points $x,y$ from two different copies of $T_{h-1}$ we define their distance as $\alpha^h$ if one of the copies is $T_{h-1}^0$ and as $2\alpha^h$ otherwise.
	
	For a set $X\subseteq T_{h-1}$ and $i=0,1,2,3$, denote by $X^i$ the copy of $X$ in $T_{h-1}^i$, and similarly if $X$ is a point rather than a set. We define a special point $s_h\in T_h$ as follows: $s_0$ is the single point in $T_0$. For $h\ge1$, $s_h:=s_{h-1}^0$.
	
	We consider T-MSS on $T_h$ when the server starts at $s_h$. Let $w_0=d(\,\cdot\,,s_h)$ be the initial work function. We will construct a request sequence $\sigma_h$ during which \wfa suffers cost $(6^h-1)\alpha^h(1-o(1))$ as $\alpha\to\infty$ and at whose end the work function is at most $\alpha^h+w_0$ pointwise, with \wfa returning to $s_h$ in the end. Since such a request sequence can be repeated, it will imply that the competitive ratio is at least $6^h-1=n^{(\ln 6) / (\ln 4)}-1=\omega(n^{1.29})$.
	
	For $h=0$, we simply choose the empty request sequence. Consider now $h\ge 1$. For a partial isometry $f\colon A\to B$ of $T_{h-1}$ and $i=0,1,2,3$, denote by $f^{i}\colon A^i\cup\bigcup_{j\ne i}T_{h-1}^j\to B^i\cup\bigcup_{j\ne i}T_{h-1}^j$ the map that acts like $f$ on $A^i$ and is the identity on $T_{h-1}^j$ for $j\ne i$. Note that $f^i$ is a partial isometry of $T_h$. Denote by $\sigma_{h-1}^i$ the sequence obtained by extending each partial isometry in $\sigma_{h-1}$ (the sequence from the induction hypothesis) in this way.
	
	We construct $\sigma_h$ as follows: First, we issue $2\alpha-2$ copies of $\sigma_{h-1}^{0}$. Since each work function $w$ during this sequence admits a point $p\in T_{h-1}^0$ with $w(p)\le (2\alpha-2)\alpha^{h-1}$, but $w(x)=\alpha^h$ for all $x\in\bigcup_{j\ne 0}T_{h-1}^j$, \wfa will stay within $T_{h-1}^0$ during these requests, suffering cost $(6^{h-1}-1)2\alpha^h(1-o(1))$. Then we issue the identity request with domain $\{s_{h-1}^1,s_{h-1}^2,s_{h-1}^3\}$, forcing the algorithm to move to one of these three points for cost $\alpha^h$. By symmetry, we can assume without loss of generality that \wfa moves to $s_{h-1}^1$. We now issue $2\alpha-2$ copies of $\sigma_{h-1}^{1}$, followed by the identity request with domain $\{s_{h-1}^2,s_{h-1}^3\}$. Similarly to before, \wfa again suffers cost $(6^{h-1}-1)2\alpha^h(1-o(1))$ and then moves to, say, $s_{h-1}^2$ for cost $2\alpha^h$. Finally, we issue $2\alpha-2$ copies of $\sigma_{h-1}^{2}$ followed by the request $\{s_{h-1}^3\}\to \{s_h\}, s_{h-1}^3\mapsto s_h$, increasing \wfa's cost by another $(6^{h-1}-1)2\alpha^h(1-o(1)) + 2\alpha^h$. Overall, \wfa suffers cost
	\begin{align*}
	(6^{h-1}-1)(2+2+2)\alpha^h(1-o(1))+(1+2+2)\alpha^h=(6^{h}-1)\alpha^h(1-o(1)),
	\end{align*}
	as claimed. Moreover, the final work function is $\alpha^h+w_0$ because an offline algorithm could move to $s_{h-1}^3$ for cost $\alpha^h$ at the start of the request sequence, suffer no cost during the rest of the sequence, and be mapped back to $s_h$ (for free) via the final request.

\section{Bounds on the Metric Extension Blow-up}\label{sec:blowup}

In this section we prove upper an lower bounds on the blow-up for several families of metric spaces and transformations,
proving Theorem \ref{thm:bbounds}.

\bbounds*

It was shown independently by Solecki~\cite{solecki05} and Vershik~\cite{vershik08} that every \emph{finite} metric space $M$ admits a \emph{finite}\footnote{The existence of an \emph{infinite} ultrahomogeneous metric space that extends every finite metric space, called the Urysohn universal space, has been known since the 1920s.} weakly ultrahomogeneous extension $\hat M$. An elementary proof of this result was presented very recently in \cite{hubickaKN18}. However, the main part of the construction in \cite{hubickaKN18} consists of $\lceil R\rceil$ growing steps, where $R$ is the aspect ratio of $M$, and a naive bound on the growth factor in the $i$th step alone is already doubly exponential in $i$. Thus, this does not yield an upper bound on the cardinality of $\hat M$ in terms of the cardinality of $M$, but only one that also involves the aspect ratio. Thus, even though there exists a finite extension for any $n$-point metric, it is unclear whether its size can be bounded as a function of $n$. If not, this would mean that the blow-up for general metrics and isometries is infinite.

\subsection{General metrics with swap transformations}

Let $M$ be an $n$-point metric. We will show how to extend $M$ to a $2^{n-1}$-point space where every swap extends to an automorphism. The tightness of this upper bound follows from the lower bound for ultrametrics (with $n-1$ distinct distances) proved in Section~\ref{sec:ultrametrics}. There, we will show that even if only maps with $1$-point domain need to extend to automorphisms, the extended space may require cardinality $2^{n-1}$.

We embed $M$ into the vector space $\hat M=\mathbb F_2^{n-1}$ by enumerating the points $p_1,\dots,p_n$ of $M$ in arbitrary order and defining the embedding $\varphi:p_k\mapsto (1^{k-1},0^{n-k})$. We choose the metric on $\hat M$ to extend that of $M$ and also be translation invariant, and explain just below why such a choice exists. Now for $x,y\in \mathbb F_2^{n-1}$, their swap is translation by $(x+y)$. Since the metric on $\hat M$ is translation invariant, this translation gives the desired extension to an automorphism on $\hat M$.

Now we explain why there exists such a translation invariant metric on $\hat M$. We first extend to a partial metric $(\hat M,\hat d)$ only by translation invariance, i.e., $\hat d$ is the partial function on $\hat M\times \hat M$ defined by $\hat d(x,y)=d(p_i,p_j)$ if $x-y=\varphi(p_i)-\varphi(p_j)$. Viewing $\hat d$ as a weighted graph $G$, the shortest path extension of $\hat d$ is also translation invariant. Moreover, it gives a valid metric on $\hat M$ if and only if there is no cycle in $G$ violating the triangle inequality.

Now, the values $\varphi(p_i)-\varphi(p_j)\in\hat M$ range over vectors with a single continuous block of $1$s. A cycle in $G$ consists of a multiset $\mathcal S$ of such vectors adding to $0$. Viewing $\varphi(p_i)-\varphi(p_j)$ as an edge connecting $p_i$ and $p_j$, we claim that any such multiset $\mathcal S$ must correspond to a multigraph on $M$ with even degree at each vertex $p_i$. Indeed, for $i\ge 2$, the parity of the degree at $p_i$ is the difference between the coordinates $i-1$ and $i$ in the summed vector, which is $0$ by definition (treat coordinate $n$ as being identically $0$). Since the sum of degrees is even, also $p_1$ must have even degree. Therefore this multigraph has an Eulerian circuit and in particular a cycle containing each edge. Since the edge weights are now exactly distances in $M$, we are done by applying the triangle inequality in $M$.

\subsection{Ultrametrics}\label{sec:ultrametrics}
We consider ultrametrics with at most $k$ distinct non-zero distances. Note that any $n$-point ultrametric has at most $n-1$ distinct distances, so the general bounds on ultrametrics follow from the case $k=n-1$.

\paragraph{Upper bound.}
Recall that ultrametric spaces may be viewed as the leaves of a rooted tree in which vertices with lowest common ancestor at level $i$ have distance $L_i$, where $0<L_1<L_2<\dots<L_k$ are the possible distances.

We construct $\hat M$ by augmenting the $n$-leaf tree corresponding to $M$ with additional points to create a symmetric tree $\hat M$ where partial isometries extend to automorphisms. We claim this can be done with at most $\left(\frac{n+k-1}{k}\right)^{k}$ total leaves. Indeed, the original tree's non-leaf vertices have $C_1,\dots,C_j$ children for some numbers satisfying $\sum_{i=1}^j (C_j-1)=n-1$. Therefore letting $E_i$ be the maximal number of children for any vertex at level $i$, we have $\sum_i (E_i-1)\leq n-1$. We take $\hat M$ to be the leaves of a fully symmetric tree in which every vertex at level $i$ has $E_i$ children. It is clear that any partial isometry of this symmetric tree extends to an automorphism. Moreover $|\hat M|=\prod_i E_i$. Given the constraint $\sum_{i=1}^k (E_i-1)\leq n-1$, the bound $|\hat M|\leq \left(\frac{n+k-1}{k}\right)^{k}$ follows from AM-GM. Since each $E_i$ is an integer, the precise bound is $(a+1)^b a^{k-b}$ if $n-1=ak+b$ for $b\in \{0,1,\dots,k-1\}$.

\paragraph{Lower bound.}
Let $M_0,\dots,M_k$ be disjoint sets, where $M_0=\{p_0\}$ is a singleton and $M_1,\dots,M_k$ have cardinalities $\lfloor\frac{n-1}{k}\rfloor$ or $\lceil\frac{n-1}{k}\rceil$ such that the union $M:=M_0\dot\cup M_1\dot\cup\dots\dot\cup M_k$ has cardinality $n$. We define an ultrametric on $M$ by defining the distance between any two distinct points $x\in M_i$, $y\in M_j$ with $i\le j$ to be $2\cdot 3^{j-1}$.

Let $\hat M\supseteq M$ be a weakly ultrahomogeneous extension of $M$. For a point $x\in \hat M$ and $j=0,\dots,k$, denote by $B_j(x)$ the set of points in $\hat M$ within distance strictly less than $3^j$ from $x$. We claim that $|B_j(p_0)|\ge \prod_{i=1}^{j}(|M_i|+1)$ for each $j=0,\dots,k$. This implies that $|\hat M|\ge \prod_{i=1}^k(|M_i|+1)\ge \prod_{i=1}^k\lfloor\frac{n+k-1}{k}\rfloor$.

To prove the claim, we proceed by induction on $j$. Clearly it is true for $j=0$. For $j\ge 1$, consider the balls $B_{j-1}(p)$ for $p\in M_j\cup\{p_0\}$. By the triangle inequality, they are disjoint and contained in $B_j(p_0)$. Since for each $p\in M_j$ the map $p_0\mapsto p$ extends to an isomorphism, they must also all have the same cardinality, which is at least $\prod_{i=1}^{j-1}(|M_i|+1)$ by the induction hypothesis. Thus, $B_j(p_0)$ has cardinality at least $\prod_{i=1}^{j}(|M_i|+1)$.

\subsection{The line and weighted $\ell_1$-norms with translations}\label{sec:weightedL1}

The extension of $n$ equally spaced points on a line is simple: We extend it to a circle of $2n-2$ equally spaced points. It is easy to see that the circle is ultrahomogeneous because any (partial) isometry is a combination of a rotation and possibly a reflexion. We will now extend this idea to multiple dimensions.

Consider the space $M:=\{0,1,\dots,k\}^D$ with the distance given by the weighted $\ell_1$-norm $d(x,y):=\sum_{i=1}^D w_i|x_i-y_i|$, where $w_1,\dots,w_D$ are arbitrary positive weights. As partial isometries, we consider the family of translations $x\to x+v$ that map a subset of $M$ to another subset of $M$. We will show that the associated blow-up is precisely $(2k)^D$. Note that the lower bound for $D=1$ also yields a tight lower bound of $2n-2$ for the blow-up of equally spaced points on a line,

\paragraph{Upper bound.} Notice that any translation is a composition of translations of the form $x\to x+e_i$ and their inverses, where $e_i\in\{0,1\}^D$ is the vector with a $1$-entry in only the $i$th coordinate. It therefore suffices to extend $M$ to a metric space $\hat M$ where partial isometries of this restricted type extend to global isometries.

We extend $M$ to the space $\hat M=\{0,\dots,2k-1\}^D$ and define a metric on $\hat M$ by
\begin{align*}
\hat d(x,y):=\sum_{i=1}^D w_i\min\{|x_i-y_i|,2k-|x_i-y_i|\}.
\end{align*}
This is the metric induced by the weighted $\ell_1$-norm when viewing $\hat M$ as a $D$-dimensional torus. Clearly, $\hat d$ extends $d$. Moreover, any isometry $x\to x+e_i$ defined on a subset of $M$ extends to the automorphism $x\to x+e_i \text{ mod }2k$ on $\hat M$, where the ``$\text{mod }2k$'' is applied coordinate-wise.

\paragraph{Lower bound.} Let $A_0:=\{0,1,\dots,k\}$ and $A_1:=\{0,1\}$. For a $0$-$1$-string $i_1i_2\dots i_D$, consider the translation
\begin{align*}
f_{i_1\dots i_D}\colon &A_{i_1}\times \dots\times A_{i_D} \to M\\
&x\mapsto x+(k-1)\cdot(i_1,\dots, i_D).
\end{align*}
The choice of domain of $f_{i_1\dots i_D}$ is just to ensure that the image is still in $M$.

Let $\hat M\supseteq M$ be an extension of $M$ such that each $f_{i_1\dots i_D}$ extends to an automorphism $\hat f_{i_1\dots i_D}$ of $\hat M$.

Let $C_0:=\{0,1,\dots,k-1\}$, $C_1:=\{1,2,\dots,k\}$ and $S_{i_1\dots i_D}:=\hat f_{i_1\dots i_D}(C_{i_1}\times\dots\times C_{i_D})$.

Note that each set $S_{i_1\dots i_D}$ has cardinality $k^D$, and there are $2^D$ such sets in total, corresponding to the $2^D$ possible $0$-$1$-strings of length $D$. Thus, the lower bound of $(2k)^D$ on the cardinality of $\hat M$ follows from the following claim.

\begin{claim}
	The sets $S_{i_1\dots i_D}$ are pairwise disjoint for different $0$-$1$-strings $i_1\dots i_D$.
\end{claim}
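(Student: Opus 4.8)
The plan is to argue by contradiction, isolating a single coordinate in which the two strings disagree and playing the two translations against each other through a pair of ``anchor'' points. Fix distinct $0$-$1$-strings $i=i_1\dots i_D$ and $i'=i'_1\dots i'_D$ and a coordinate $j_0$ with $i_{j_0}\neq i'_{j_0}$; by symmetry of the claim assume $i_{j_0}=1$ and $i'_{j_0}=0$. Suppose for contradiction that some $z\in\hat M$ lies in both $S_{i_1\dots i_D}$ and $S_{i'_1\dots i'_D}$, so $z=\hat f_{i_1\dots i_D}(c)=\hat f_{i'_1\dots i'_D}(c')$ with $c\in C_{i_1}\times\dots\times C_{i_D}$ and $c'\in C_{i'_1}\times\dots\times C_{i'_D}$. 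Then $c,c'\in M$, and in coordinate $j_0$ we have $c_{j_0}\in C_1=\{1,\dots,k\}$, hence $c_{j_0}\ge 1$, while $c'_{j_0}\in C_0=\{0,\dots,k-1\}$, hence $c'_{j_0}\le k-1$; this opposite-sided behavior in coordinate $j_0$ is the feature we will contradict.

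The key step is to produce two anchor pairs: points $p^{(0)},p^{(1)}$ in the domain $A_{i_1}\times\dots\times A_{i_D}$ of $f_{i_1\dots i_D}$ and $q^{(0)},q^{(1)}$ in the domain $A_{i'_1}\times\dots\times A_{i'_D}$ of $f_{i'_1\dots i'_D}$ such that $p^{(0)}$ and $p^{(1)}$ agree outside coordinate $j_0$ with $p^{(0)}_{j_0}=0$, $p^{(1)}_{j_0}=1$, likewise $q^{(0)}_{j_0}=k-1$, $q^{(1)}_{j_0}=k$, and --- crucially --- the images coincide: $\hat f_{i_1\dots i_D}(p^{(\ell)})=\hat f_{i'_1\dots i'_D}(q^{(\ell)})=:P^{(\ell)}$ for $\ell\in\{0,1\}$. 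Since on their domains these maps are the prescribed translations by $(k-1)\cdot i$ and $(k-1)\cdot i'$, the image-matching condition is just $p^{(\ell)}_j+(k-1)i_j=q^{(\ell)}_j+(k-1)i'_j$ in each coordinate $j$; because $A_0=\{0,\dots,k\}$ is wide while $A_1=\{0,1\}$, one checks coordinate by coordinate (distinguishing whether $i_j=i'_j$ or not, and separating the coordinate $j_0$) that a legal common choice always exists, with the $j_0$-coordinates of the images forced to be $k-1$ and $k$; also $P^{(0)},P^{(1)}\in M$. The conceptual crux is the idea of hunting for anchor points with a \emph{common} image under both automorphisms so that, after differencing two distance identities, every coordinate other than $j_0$ cancels; the feasibility verification itself is routine bookkeeping, which is where I expect the only real fiddliness to lie.

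Given the anchors, the contradiction is immediate: $\hat f_{i_1\dots i_D}$ and $\hat f_{i'_1\dots i'_D}$ are automorphisms, hence isometries, of $\hat M$, and $c,c',p^{(\ell)},q^{(\ell)},P^{(\ell)}$ all lie in $M$, where $\hat d$ restricts to the weighted $\ell_1$-metric $d$; so $\hat d(z,P^{(\ell)})=\hat d(c,p^{(\ell)})=d(c,p^{(\ell)})$ and also $\hat d(z,P^{(\ell)})=d(c',q^{(\ell)})$ for $\ell\in\{0,1\}$. Subtracting the $\ell=0$ from the $\ell=1$ case, and using that $p^{(0)},p^{(1)}$ (resp. $q^{(0)},q^{(1)}$) differ only in coordinate $j_0$,
\begin{align*}
-w_{j_0}&=w_{j_0}\bigl(|c_{j_0}-1|-|c_{j_0}|\bigr)=d(c,p^{(1)})-d(c,p^{(0)})\\
&=\hat d(z,P^{(1)})-\hat d(z,P^{(0)})\\
&=d(c',q^{(1)})-d(c',q^{(0)})=w_{j_0}\bigl(|c'_{j_0}-k|-|c'_{j_0}-(k-1)|\bigr)=w_{j_0},
\end{align*}
the first equality using $c_{j_0}\ge 1$ and the last using $c'_{j_0}\le k-1$. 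Since $w_{j_0}>0$ this is absurd, so $S_{i_1\dots i_D}\cap S_{i'_1\dots i'_D}=\emptyset$; summing $|S_{i_1\dots i_D}|=k^D$ over the $2^D$ strings then gives $|\hat M|\ge (2k)^D$, as needed.
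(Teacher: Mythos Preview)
Your proof is correct and follows essentially the same approach as the paper: both arguments pick two anchor points in $M$ differing only in one coordinate, pull them back through the automorphisms, and use the $\ell_1$-structure together with the membership constraints $c_{j_0}\in C_1$, $c'_{j_0}\in C_0$ to force opposite signs on a single weighted coordinate difference. The paper's write-up is slightly slicker in that it fixes the anchors once and for all as $k\cdot\1$ and $k\cdot\1-e_j$ (which lie in the image of every $f_{i_1\dots i_D}$) and reads off $i_j$ directly from whether $y$ is closer to one or the other, avoiding your case-by-case construction of $p^{(\ell)},q^{(\ell)}$ and the contradiction framing; but the substance is the same.
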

\begin{proof}
	Let $y\in S_{i_1\dots i_D}$ for some $0$-$1$-string $i_1\dots i_D$. We will show that $i_1\dots i_D$ is uniquely determined by $y$.
	
	We can write $y=\hat f_{i_1\dots i_D}(x)$ for some $x\in M$. It suffices to show that $i_j=0$ if and only if $y$ is closer to $k\cdot\1-e_j$ than to $k\cdot\1$, where $\1$ denote the all-ones vector. Equivalently, we will show that
	\begin{align*}
	i_j=0\iff d(x, f_{i_1\dots i_D}^{-1}(k\cdot\1-e_j))< d(x, f_{i_1\dots i_D}^{-1}(k\cdot\1)).
	\end{align*}
	Note that the preimages $f_{i_1\dots i_D}^{-1}(k\cdot\1)$ and $f_{i_1\dots i_D}^{-1}(k\cdot\1-e_j)$ exist in $M$, and they differ only in their $j$th entry.
	
	If $i_j=0$, then $x_j\le k-1$ (by definition of $C_{i_j}$) and the $j$th entries of $f_{i_1\dots i_D}^{-1}(k\cdot\1-e_j)$ and $f_{i_1\dots i_D}^{-1}(k\cdot\1)$ are $k-1$ and $k$, respectively. Thus, $x$ is closer to $f_{i_1\dots i_D}^{-1}(k\cdot\1-e_j)$ than to $f_{i_1\dots i_D}^{-1}(k\cdot\1)$.
	
	If $i_j=1$, then $x_j\ge 1$ (by definition of $C_{i_j}$) and the $j$th entries of $f_{i_1\dots i_D}^{-1}(k\cdot\1-e_j)$ and $f_{i_1\dots i_D}^{-1}(k\cdot\1)$ are $0$ and $1$, respectively. Thus, $x$ is further from $f_{i_1\dots i_D}^{-1}(k\cdot\1-e_j)$ than from $f_{i_1\dots i_D}^{-1}(k\cdot\1)$.
\end{proof}

\bibliography{bibliography}{}
\bibliographystyle{plain}

\end{document}